\newtheorem{theorem}{Theorem}[section]
\newtheorem{lemma}[theorem]{Lemma}
\newtheorem{corollary}[theorem]{Corollary}
\newtheorem{proposition}[theorem]{Proposition}
\newtheorem{assumption}{Assumption}
\theoremstyle{remark}
\newtheorem{remark}{Remark}
\newcommand{\R}{{\mathord{\mathbb R}}}
\newcommand{\Z}{{\mathord{\mathbb Z}}}
\newcommand{\N}{{\mathord{\mathbb N}}}
\DeclareMathOperator{\supp}{supp}
\DeclareMathOperator{\Int}{Int}
\DeclareMathOperator{\adh}{Adh}
\newcommand{\ext}{{\rm ext}}
\newcommand{\cA}{\mathcal{A}}
\newcommand{\cB}{\mathcal{B}}
\newcommand{\cD}{\mathcal{D}}
\newcommand{\cL}{\mathcal{L}}
\newcommand{\di}{\,\mathrm{d}}
\newcommand{\norm}[1]{\left\|#1\right\|}
\newcommand{\bvec}[1]{\boldsymbol{#1}}
\newcommand{\inprod}[2]{\left \langle #1 , #2\right \rangle}
\newcommand{\Dom}{\mathcal{D}}
\newcommand{\Ran}{\operatorname{Ran}}
\newcommand{\Tr}{\operatorname{Tr}}
\title[Spectrum of linearized Vlasov--Poisson]{Spectrum of the linearized Vlasov--Poisson equation around steady states from galactic dynamics}
\date{\today}
\author[M. Moreno]{Matías Moreno}
\address{Departamento de Ingenier\'ia Matem\'atica and Centro
de Modelamiento Matem\'atico (CNRS IRL 2807), Universidad de Chile, Beauchef 851, Santiago, Chile}
\email{mmoreno@dim.uchile.cl}
\author[P. Rioseco]{Paola Rioseco}
\email{paola.rioseco@uchile.cl}
\author[H. Van Den Bosch]{Hanne Van Den Bosch}
\email{hvdbosch@dim.uchile.cl}
\begin{document}

\maketitle

\begin{abstract}
  We study the linearized Vlasov--Poisson equation in the gravitational case around steady states that are decreasing  and continuous functions of the energy. We identify the absolutely continuous spectrum and give criteria for the existence of oscillating modes and estimate their number. Our method allows us to take into account an attractive external potential. 
\end{abstract}

\medskip

\noindent{\bf Acknowledgments.}  
The authors received support from the Center for Mathematical Modeling (Universidad de Chile \& CNRS IRL 2807) through ANID/Basal projects \#FB210005. P.R acknowledges ANID/Fondecyt project \#322--0767. H.VDB. and M.M. received funding from ANID/Fondecyt project \#1122--0194.

\section{Introduction}
In this paper, we study the linearization of the gravitational Vlasov--Poisson system around steady states that are monotone functions of the microscopic energy. 
Our goal is to obtain precise results on the spectrum of the linearized operator, depending on the classical trajectories in the potential generated by the steady state solutions. 
We obtain two results of different natures. 

First, for the case of spherically symmetric perturbations in dimension $3$ and sufficiently regular steady states, we show that the self-interaction is a relatively trace-class perturbation. This statement has important consequences for the dynamics of solutions to the linearized equations. It means that the time evolution when restricted to this spectral subspace is closely resembles the time evolution without the interaction.

\medskip

The second part of our work considers the point spectrum of the linearized operator, building on previous work by Mathur \cite{1990SM} and the recent articles of Had\v{z}i\'{c} et al. \cite{hadzic2023,hadvzic2022}. We provide a condition on the steady state to determine whether bound states exist or not, see Theorem~\ref{thm:eigenvalues-intro}.  Bound states correspond to oscillating modes in the linearized dynamics of the Vlasov--Poisson system. These results complement those in the recent preprint \cite{hadzic2023}, which appeared during the preparation of this manuscript. In \cite{hadzic2023}, the linearized equation is analyzed around steady states where the angular momentum takes a fixed value $L_0 >0$. In this case, there is no matter in a neighborhood of the origin. In this paper, we study the complementary case, in the sense that the steady states are functions of the energy only. 
If $\phi$ does not depend on angular momentum, the support of the steady states is an open set in the six-dimensional phase-space, instead of the five-dimensional manifold of trajectories with fixed value of $L =|\bvec{x} \times \bvec{v}|$. At the technical level, this makes the treatment more involved and also explains the difference in the conclusions.

\medskip
The Vlasov-Poisson system  in $d=3$ is the system
\begin{align} \label{eq:VP-intro-1}
    \partial_t f &= -\bvec v\cdot\nabla_{\bvec x} f + \nabla_{\bvec x}(U_f+U_{\rm ext})\cdot\nabla_{\bvec v} f,  \\
 \label{eq:VP-intro-2}
    U_f(t,\bvec x)&=-(1/|\cdot|\ast \rho_f)(t,\bvec x), \quad \rho_f(x) := \int_{\R^2} f(x,v) \di v 
\end{align}
We allow for radially symmetric external potentials $U_{\rm ext} \in C^2(\R^3)$. Global existence for this system has been shown in \cite{pfaffelVP}.
This equation describes the evolution of a density $f$, depending on the position $x\in \R^3$ and momentum (or velocity) $v \in \R^3$, subject to its own Coulomb potential $U_f$. It can be thought of as a mean-field description of a large number of particles interacting through Newtonian gravity. 

The gravitational Vlasov--Poisson system has infinitely many steady state solutions.
A direct way to obtain steady states is to define the microscopic energy $E= E(x,v) = v^2/2 + U(x)$ and look for a solution of the form $f(x,v) = \phi(E)$, for a suitable function $\phi$. Such an ansatz gives rise to a stationary solution if there is a potential $U$ satisfying 
\[
U(x)-U_{\rm \ext}(x) = \int |x-y |^{-1} \int \phi(v^2/2 + U(y))\di v \di y.
\]
Equivalently, by defining
\[
\Phi(u) :=\int_{\R^3} \phi(\bvec v^2/2 + u)\di \bvec v = \int_{\R} \phi(E)\sqrt{2(E-u)_+} \di E, \quad \text{ and }\rho_{\ext}= -\Delta U_{\ext}
\]
and applying $-\Delta$ to both sides, $U$ solves 
\[
-\Delta( U ) = 4 \pi \Phi(U) - 4 \pi \rho_\ext.
\]
When $\phi$ is of the form $(E_0-E)_+^n$ and $\rho_\ext = 0$, this is the classical Lane-Emden equation (since then $\Phi(u)= c_n (E_0-u)_+^{n+3/2}$). For each choice of $R_0>0$, it has a unique, radial, negative solution in $B(0,R_0)$ that vanishes at $|\bvec x|= R_0$. For more general choices of $\phi$ and nonzero right-hand-sides, one may for instance \cite[Theorem 11.5]{GilbargTrudinger} to obtain solutions for all H\"{o}lder continuous $\Phi$ and $\rho_\ext$.

A more systematic way to show the existence of steady states is to look at minimizers of the so-called Energy-Casimir functionals. We refer to \cite{Rein98,GuRe01} for a detailed and careful treatment, but the method allows to show existence of minimizers that are compactly supported, of finite mass and with finite density, and again of the form $f=\phi \circ E$. A generalization of the method allows to include a dependence on angular momentum $L := |x \wedge v|$ as well, such that the steady states take the form $f(x,v) = \widetilde \phi (E(x,v),L(x,v))$ for suitable functions $\widetilde \phi $ depending on two real variables.

Among this giant collection of steady states, the so-called polytropes play a special role in the physics literature. The \emph{isotropic polytropes} are steady states depending on energy only, of the form $\phi(e)= C(E_0-e)_+^n$, where $C$ is a constant and $E_0$ a parameter, and the \emph{anisotropic} generalization $\widetilde \phi(e,l) = C l^{-s} (E_0 -e)^n$. Here $e$ and $l$ are just labels for the variables. Their names merely stem from the analogy with equations of state for fluids, and all these solutions are radially symmetric.

\medskip

In this paper, we will consider a general steady state that is a decreasing and sufficiently regular function of the energy only, see Assumption~\ref{hypothesis}. Once a steady state is identified, one may wonder about its stability, i.e., about the long term behaviour of small perturbations of the steady state.

Orbital or neutral stability, i.e., initial conditions close to a steady state remain close to this steady state, has been established for the case $U_{\rm ext} = 0$. This stability can be obtained in the variational context, see e.g. \cite{wolansky1999nonlinear,guo1999variational,guo2000generalized,dolbeault2004asymptotic, sanchez2006orbital,lemou2008orbital}, under the assumption that minimizers are isolated, which has been proven for the polytropic solutions in \cite{schaeffer2004steady}. In  \cite{lemou2011new, lemou2012orbital} the authors overcome the need for isolated minimizers using decreasing rearrangements on the level sets of the ground state energy.

The question is then whether initial conditions close to a steady state converge or \emph{relax} to some steady state, or might oscillate forever. In the plasma context (repulsive interactions), the former occurs and the phenomenon is called \emph{Landau Damping}. It has been shown to hold in the spatially homogeneous context by \cite{cMcV11} (see also \cite{bedrossian2016landau} for further developments).
It turns out that the classical Penrose criterion \cite{penrose1960} for the absence of exponentially growing modes at the linear level also ensures nonlinear damping for (Gevrey) regular perturbations.

The analogous question for  spatially inhomogeneous steady states remains widely open. Recent progress in this direction includes \cite{faou2021} where damping in the linearized equation associated to a simplified model is studied, and \cite{pausader2022}, where decay is shown for a repulsive kinetic gas orbiting a point charge.

\medskip

In the context of astrophysical steady states, nonlinear damping seems out of reach at the moment. However, the behaviour of the linearized equation plays a crucial role in the proofs of Landau damping, and this fact motivates the present study. The linearized equation has a long history in the physics literature. 
In \cite{doremus1973stability, gillon1976stability, kandrup1985simple}, the linear stability of the linearized equation is studied, while
 Mathur investigates the possibility of eigenvalues or oscillating modes in \cite{1990SM}, which assumes a confining background potential. 
His treatment has been placed on a rigorous footing in \cite{hadvzic2022} or the monograph \cite{kunze2021}. 
\medskip

\subsection{The Linearized Vlasov--Poisson System}

In this paper, we study the linearization of \eqref{eq:VP-intro-1}
around a fixed, radially symmetric stationary solution $f_0$, such that
\begin{equation} \label{eq:VP-intro-stationary }
   \bvec v\cdot\nabla_{\bvec x} f_0 - \nabla_{\bvec x}(U_{f_0}+U_{\rm ext})\cdot\nabla_{\bvec v} f_0 = 0 .
  \end{equation}
To shorten the notation, we define
$$
U = U_{f_0}+U_{\rm ext}.
$$

We refer to the introduction in \cite{hadvzic2022} for details on the derivation of the linearized equation and its interpretation, and limit ourselves to a formal calculation for a small and regular perturbation of the stationary state in the form $f_0 + g$.
Dropping quadratic terms in $g$ from equation~\eqref{eq:VP-intro-1} gives
\begin{equation}
    \label{eq:lin_VP}
\partial_t g = -\cL g + \nabla_{\bvec v} f_0 \cdot \nabla_{\bvec x} U_g,
\end{equation}
with the potential 
$$
U_g(\bvec x):=-\int_{\R^d \times \R^d} \frac{g(\bvec y,\bvec v)}{|\bvec x-\bvec y|}\di \bvec v \di \bvec y,
$$
and the Liouville operator associated to the stationary state
$$
\cL := \bvec v\cdot \nabla_{\bvec{x}} - \nabla_{\bvec{x}} U\cdot \nabla_{\bvec{v}}.
$$
The term $\cL$ determines the transport of $g$ in phase space along the trajectories associated to the potential $U$, while the second term shows how the perturbation interacts gravitationally with the steady state $f_0$.

Following Antonov \cite{1987Antonov} (see again \cite{hadvzic2022}) for a detailed derivation), this equation can be rewritten as a system for the projections $g_\pm$ of $g$ on the subspaces of functions that are even ($g_+$) and odd ($g_-$) in the velocity variable. The key observation is that $\int_v g_- = 0 $, and thus the odd part does not contribute to the interaction term. 
Also, $\cL$ and $\nabla_{\bvec{v}}$ map even functions in $v$ to odd functions in $v$ and vice versa, so the equation becomes the system
\begin{align*}
       \partial_t g_+ &= -\cL g_- \\
    \partial_t g_- &= -\cL g_+ + \nabla_{\bvec v} f_0 \cdot \nabla_{\bvec x} U_{g_+}.
\end{align*}
By taking a time-derivative of the second equation, we obtain a closed equation for the odd part
\begin{equation}
    \label{eq:lin_VP_squared}
\partial_t^2 g_- = \cL^2 g_-  +  \cB g_-:= - \cA g_-,
\end{equation}
with 
\begin{align*}
    \cB g (\bvec x,\bvec v) := \nabla_{\bvec v}
    f_0(\bvec x,\bvec v)\cdot \nabla_{\bvec x}\int_{\R^d\times \R^d} \frac{\cL   g(\bvec{y},\bvec{u}) }{|\bvec x-\bvec y|} \di \bvec y \di \bvec u .\\
\end{align*}
The operator $\cA$ is called Antonov operator, and its spectrum determines the behaviour of the linearized equation.
It was introduced by Antonov \cite{Antonov61}.
The key point is that it can be realized as a self-adjoint operator and its spectrum belongs to $[0, +\infty)$, so there are no exponentially growing modes. Once this is settled, one can ask more detailed questions about the spectral properties of $\cA$, that determine the linearized dynamics. 
For instance, if $\cA$ has positive eigenvalues, there are perturbations that do not decay at the linearized level, but rather oscillate forever. To the contrary, perturbations in the continuous spectral subspace exhibit mixing in a time-averaged sense, and those in the absolutely continuous subspace exhibit mixing as $t \to \infty$, by the RAGE Theorem, see for instance \cite{Teschl}[Section 5.2] for a nice introduction.

As noted by Mathur, in analogy with the case of Schrödinger operators, the \emph{main} part of the operator is the free transport part $-\cL^2$. We will recall below how its spectrum can be found, and it turns out to be purely absolutely continuous (except for its kernel), which implies (weak) decay or \emph{mixing} for the time evolution. 
The question is then how the interaction term $\cB$, which is a relatively compact perturbation, changes the spectrum. Again, from the analogy with a Schrödinger operator, one could expect eigenvalues to appear in the \emph{gaps} of the essential spectrum. A powerful and versatile tool to obtain information on these is the Birman-Schwinger principle, which reduces an eigenvalue for unbounded operator to a question for related compact operators. In the current context, this approach was followed by \cite{1990SM, kunze2021, hadvzic2022, hadzic2023}  and recently extended to the Einstein-Vlasov system in \cite{gunther2022}. For an overview of these results, we refer to the recent review \cite{rein2023}.

While $\cA$ has the same essential spectrum as $-\cL^2$, it is not clear (at least to us) that it preserves the absolutely continuous \emph{quality} of the essential spectrum and the ensuing mixing property. There is a huge literature on Schr\"{o}dinger operators where conditions on potentials ensuring the presence or absence of singularly continuous spectrum are given, and it would be interesting to see if some of the techniques can be adapted to deal with perturbation of $-\cL^2$, and, maybe counter-intuitively, describe \emph{mathematical scattering} (in the sense of e.g., \cite{Yafaev92}) for a bounded physical system.

\medskip

From now on, we will consider a fixed external potential $U_{\rm ext}$ and a fixed steady state solution $f_0$ of \eqref{eq:VP-intro-stationary }. We assume throughout that $f_0$ and $U_{\rm ext}$ satisfy the following assumptions, which include for instance the \emph{sufficiently regular} isotropic polytropes. 

\begin{assumption}
\label{hypothesis} ~ 
\begin{enumerate}[itemsep=0.1em]
\item\label{hyp:on_U_ext}The external potential $U_{\rm ext}$ is negative, nondecreasing, of class $C^2$. Furthermore,  $U_{\rm ext}$ is radially symmetric, subharmonic and tends to $0$ at infinity. We will abuse notation and write $U_{\rm ext}(\bvec x) = U_{\rm ext}(|\bvec x|)$.
\item \label{hyp:isotropic} The steady state $f_0$ is rotation invariant and depends on the energy $E(\bvec x, \bvec v)= U(\bvec{x}) + 
|\bvec v|^2/2$ only. We define $f_0= \phi \circ E$. The function $\phi$ has compact support $[U(0), E_0]$ for some $E_0 <0$.
    \item \label{hyp:on_phi'} $\phi$ is differentiable in the interior of its support, $\phi'(E) < 0$ and $\phi'(E(\bvec x, \bvec v))$ is integrable.
\end{enumerate}
\end{assumption}

We now group some basic consequences of this Assumption.
Point \ref{hyp:on_U_ext}
implies that for all $e<0$, the set $\{(\bvec{x}, \bvec{v})| E(\bvec{x}, \bvec{v}) \le e\}$ is bounded. 
Hence, since $f_0$ is supported (Point \ref{hyp:isotropic}) in $\{(\bvec{x}, \bvec{v})| E(\bvec{x}, \bvec{v}) \le E_0\}$, it has compact support in $\R^3\times \R^3$. For later reference, we define
\begin{equation} \label{eq:def_Omega_0}
    \Omega_0 = \Int \supp (f_0) \subset \R^3 \times \R^3 \quad 
    \text{and } \quad B(0,R_0) = \Int \supp (\rho_{f_0})  \subset \R^d.
\end{equation}

Point \ref{hyp:on_phi'} implies that $\phi$ is continuous inside its support and hence
the mass density $\int f_0(\bvec x, \bvec v) \di \bvec v$ is finite.
Also note that the total potential $U$ is subharmonic as the sum of $U_{\rm \ext}$ (Assumed in Point~\ref{hyp:on_U_ext}) and $U_{f_0}$ (from the Poisson equation). In radial coordinates, this means that $-\partial_r (r U'(r)) \le 0$.

\begin{remark}
    Hypotheses \ref{hyp:isotropic} and \ref{hyp:on_phi'} are automatically satisfied for a large range of steady states. They are satisfied by the polytropic solutions $\phi(E) = (E_0-E)^k_+$ for all $k \in (0, 7/2)$.
\end{remark}
\begin{proof}
    The only property that has to be checked is that $\phi' \circ E \in L^1(\Omega_0)$.  Since $E$ depends on $v:=|\bvec v|$ and $r:= |\bvec{x}|$ only, we can first change to radial coordinates and then, from $(r,v)$ to $(r,E)$, to obtain
\begin{equation} \label{eq:identity-int}
    \int_{\Omega_0} \phi'(E) = (4\pi)^2 \int_{U(0)}^{E_0} \phi'(E) \int_{0}^{U^{-1}(E)} r^{2}[2(E-U(r))]_+^{1/2}\di r \di E.
\end{equation}
Since $\phi'(E)= - k [E_0-E]_+^{k-1}$, the integrand is bounded  except for the divergence (if $k<1$) when $E$ approaches $E_0$.
The innermost integral is bounded away from zero for $E$ close to $E_0$, so integrability of $\phi'(E)$ in $\Omega_0$ is equivalent to integrability of $\phi'$ in $(U(0), E_0)$.
For the polytropic solutions $\phi(e) = (E_0 - e)_+^{k}$, this holds if and only if $k \ge 0$. 
\end{proof}



\subsection{Classical mechanics and the period function.}

We can now turn our attention to the transport operator $\cL$
 and we state some basic properties from classical mechanics that are important for its behaviour.
The trajectories of a particle in the radially symmetric potential $U$ are confined to the plane perpendicular to the (conserved) angular momentum $\bvec L:= \bvec x \wedge \bvec v$, assuming it is nonzero. It is then convenient to take polar coordinates $(r, \varphi)$ in this plane, oriented such that an increase in $\varphi$ corresponds to a positive rotation in the direction $\bvec{L}$. In these variables, Newton's equations of motion reduce to
\begin{align*}
   \begin{cases}
       \dot{\varphi} &= L/r^2 \\
       \ddot{r} &= - \partial_r V_{\rm eff} (r)
   \end{cases}, \quad \text{ where } L := |\bvec{L}| ,
   \text{ and } V_{\rm eff} (r) := U(r) + L^2 / (2 r^2).
\end{align*}
The second equation describes the evolution of the radial variable as a one-dimensional system with Hamiltonian $v_r^2/2 + V_{\rm eff}(r)$. Once it is solved, the angle $\varphi$ can be found by integrating the first equation (Kepler's Area Law).
Since we assumed that $U(r)$ is bounded, we find that, for any nonzero $L$, the effective potential $V_{\rm eff}(r)$ diverges to $+\infty$ as $r$ approaches zero: particles with nonzero angular momentum do not reach the origin. Also, both terms in $V_{\rm eff}(r)$ tend to zero as $r$ tends to infinity. Equilibria of the dynamical system correspond to critical points of $V_{\rm eff}(r)$. 
We compute
\[
V_{\rm eff}'(r) = U'(r) - L^2/r^3 = r^{-1} \left(r U'(r) - L^2/r^2\right).
\]
The subharmonicity of $U$ from Assumption~\ref{hyp:on_U_ext} and the definition of $U_{f_0}$, implies that $r U'(r)$ is nondecreasing, while $L^2/r^2$ is decreasing from $+\infty$ to $0$, there is exactly one point $r_*= r_*(L)$ where $V_{\rm eff}'(r) = 0$, and it is a minimum. For each value of the energy $E$ in $(V_{\rm eff}(r_*), 0)$, there are two turning points $r_- (E,L)< r_+(E,L)$, the solutions to $U(r)+L^2 /(2r^{2})= E$.
For these values of $E$, there exists a bounded trajectory, where $r$ oscillates between $r_-$ and $r_+$, and the angle $\varphi$ increases according to Keppler's Second Law.
The time it takes to perform one oscillation can be obtained from the classical formula

\begin{equation} \label{eq:intro_frequencies}
  T(E,L) =2\int_{r_-(E,L)}^{r_+(E,L)} \frac{\di r}{\sqrt{2(E-V_{\rm eff}(r)) }} :=  \frac{2\pi}{\omega_r(E,L)}.
\end{equation}
Here, $\omega_r(E,L)$ is the angular frequency of the radial oscillations with energy $E$ and angular momentum $L$\footnote{there is one such orbit for each choice of the direction of $\bvec L$}, and it will play an important role throughout this paper. 
For generic potentials $U$, $\omega_r (E,L)$ is different from frequency with which $\varphi$ increases by $2 \pi$, and thus the trajectories in the physical space are not ovals, but rather rosette-like figures that may or may not close after several radial periods.

When $E$ approaches its minimum value $E_{\min}(L):=V_{\rm eff}(r_*)$,
$r_-$ and $r_+$ approach $r_{*}$, so the range of integration goes to zero, but also the integrand diverges. 
This trajectory corresponds to a circular orbit with $r(t)=r_*$. In this \emph{small oscillations}-limit, $\omega_r(E,L)$ has a continuous extension and approaches 
\begin{equation}\label{eq:circular-frequencies}
\omega_r(E_{\min}(L), L) := \sqrt{V_{\rm eff}''(r_*)} 
= \sqrt{U''(r) + 3 U'(r)/r}.
\end{equation}

In the remainder of the paper, it will be convenient to also consider $L_{\max}(E)$, which is the inverse function of $L \mapsto E_{\min }(L) $ and represents the maximum value of angular momentum for which bounded trajectories exist at the energy $E$.

It remains to consider the case $L=0$. In this case, $\bvec x$ and $\bvec v$ are co-linear for all times and the particle moves on a line through the origin. The equations of motion describe one-dimensional oscillations in the potential $U(|x|)$. 
We slightly abuse notation and define $r_-(E,L=0)=0$ and $r_+(E, L= 0)= U^{-1}(E)$ for each $E \in [U(0), 0)$. Then, 
the turning points of the linear trajectory are $\pm r_+(E,0)$. 
We \emph{define}
\begin{equation} 
  T(E,L=0) =2\int_{0}^{r_+(E,0)} \frac{\di r}{\sqrt{2(E-U(r)) }} :=  \frac{2\pi}{\omega_r(E,L=0)},
\end{equation}
such that $T$ and $\omega_r$ (and also $r_\pm$) are continuous functions of $E$ and $L$, but $T(E,L=0)$ is only \emph{half} the period of the line-trajectory oscillating between $\pm r_+(E,0)$.

The situation is illustrated in Figure~\ref{fig:plano_E_L}. Through each point of the phase space with $E(\bvec x, \bvec v) < 0$ passes a unique bounded trajectory, whose period depends only on $E(\bvec x, \bvec v)$ and $L(\bvec x, \bvec v)$. 
\begin{figure}
    \centering
    \includegraphics{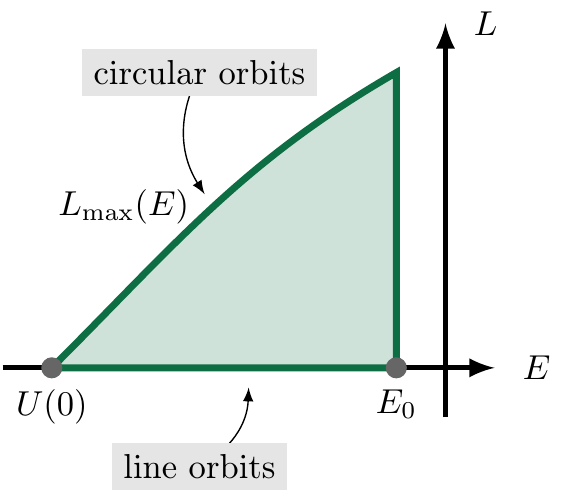}
    \caption{Sketch of the values taken by $E,L$ in the support of the steady state. }
    \label{fig:plano_E_L}
\end{figure}

\medskip
In action-angle coordinates, the Liouville operator $\cL$ takes the convenient form
$$
\cL = \omega_r(E,L) \partial_{\theta_r} + \omega_\varphi (E,L) \partial_{\theta_\varphi}
$$

The frequencies $\omega_r(E,L)$ and $\omega_\varphi (E,L)$ determine the properties of the Liouville operator $\cL$. In the remainder of the paper we will be mostly interested in radially symmetric perturbations. Functions in this subspace are independent of $\theta_\varphi$ and we are concerned with $\omega_r$ only. Heuristically, if $\omega _r$ varies fast with $E,L$, different orbits evolve at different speeds and phase-space mixing occurs. Hence the frequencies determine the \emph{free} evolution generated by $-\cL^2$, that does not take into account the self-gravitation of the perturbation. In this paper, we show that the frequencies are also crucial when studying the \emph{interacting}
operator $-\cL^2 - \cB$.

\subsection{Main Results}
We  summarize our main results in the following two theorems. We denote by $\cA_{\rm rad}$, $\cL_{\rm rad}$ the restriction of the respective operators to the subspace of rotationally invariant functions.

\begin{theorem} \label{thm:trace_class_intro}
Under Assumption~\ref{hypothesis},  in the suitable Hilbert space,   $\cB_{\rm rad}$ is relatively trace class with respect to $-\cL^2_{\rm rad}$. Hence $\cA_{\rm rad}$ has the same absolutely continuous spectrum as the non-interacting operator $-\cL^2_{\rm rad}$.
  \end{theorem}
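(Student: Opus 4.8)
The plan is to establish the trace-class property by passing to action--angle coordinates, diagonalizing the free operator $-\cL^2_{\rm rad}$, and exhibiting $\cB_{\rm rad}$ as the product of two factors each of which is Hilbert--Schmidt relative to suitable powers of $(-\cL_{\rm rad}^2 + 1)$. Since the statement concerns radially symmetric functions, only the variable $\theta_r$ survives and $\cL_{\rm rad} = \omega_r(E,L)\,\partial_{\theta_r}$; in the Hilbert space $L^2$ weighted by $|\phi'(E)|$ (the natural weight making $\cA$ self-adjoint and nonnegative, coming from the structure of \eqref{eq:lin_VP_squared}), the operator $-\cL_{\rm rad}^2$ acts, after Fourier series in $\theta_r$, as multiplication by $k^2\omega_r(E,L)^2$ on the $k$-th mode. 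First I would write $\cB_{\rm rad}$ explicitly in these coordinates: since $\cB g = \nabla_{\bvec v} f_0 \cdot \nabla_{\bvec x} U_{\cL g}$ and $\nabla_{\bvec v}f_0 = \phi'(E)\bvec v$, one has $\cB g = \phi'(E)\,\bvec v \cdot \nabla_{\bvec x}(-|\cdot|^{-1} * \rho_{\cL g})$, so $\cB_{\rm rad}$ factors through the one-dimensional object $\rho_{\cL g}(\bvec x)$, a function on $B(0,R_0)\subset\R^3$. This is the key structural point: $\cB_{\rm rad}$ has the form $D_1^* K D_2$ where $D_2 g \mapsto \rho_{\cL g}$ maps phase space to physical space, $K$ is the (smoothing) convolution with $|\cdot|^{-1}$ followed by $\bvec v\cdot\nabla_{\bvec x}$, and $D_1$ involves multiplication by $\phi'(E)$.

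Next I would insert the factorization $\cB_{\rm rad} = \bigl(\cB_{\rm rad}(1-\cL_{\rm rad}^2)^{-1/2-\epsilon}\bigr)\cdot(1-\cL_{\rm rad}^2)^{1/2+\epsilon}$, or more symmetrically $\cB_{\rm rad} = A (1-\cL^2_{\rm rad})^{-1}$ where one needs $A$ trace class, or best: write it so that relative trace class reduces to showing the two-sided version $(1-\cL^2_{\rm rad})^{-1/2}\cB_{\rm rad}(1-\cL^2_{\rm rad})^{-1/2}$ is trace class --- but since $\cB_{\rm rad}$ is already bounded in one direction after one resolvent, the cleanest route is to show $(1-\cL^2_{\rm rad})^{-1/2}\cB_{\rm rad}$ is Hilbert--Schmidt and $\cB_{\rm rad}(1-\cL^2_{\rm rad})^{-1/2}$ is Hilbert--Schmidt; the product being trace class then follows, and $\cB_{\rm rad}(1-\cL^2_{\rm rad})^{-1}$ is trace class, which is the definition of relatively trace class. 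To estimate the Hilbert--Schmidt norms I would compute the integral kernel of, say, $\cB_{\rm rad}(1-\cL^2_{\rm rad})^{-1/2}$ in the $(k, E, L)$ representation. The gain comes from two sources: (i) the convolution kernel $|\bvec x - \bvec y|^{-1}$ together with the $\bvec v\cdot\nabla_{\bvec x}$ produces, in the multipole/radial decomposition on $B(0,R_0)$, a kernel with summable behavior; (ii) the factor $(1+k^2\omega_r^2)^{-1/4}$ from the resolvent contributes decay in $k$. The crucial quantitative input is how the Fourier coefficients in $\theta_r$ of the radial coordinate $r(\theta_r; E, L)$ decay in $k$ --- since $r$ is a smooth (indeed analytic, away from the degenerate circular orbits) periodic function of $\theta_r$, its coefficients decay rapidly, and this is what makes the $k$-sum converge with room to spare. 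One must also verify integrability in $(E,L)$ near the boundary of $\Omega_0$: near $E \to E_0$ the weight $|\phi'(E)|$ may blow up but the $r$-range collapses (for the relevant polytropes this is exactly the borderline computation already done for \eqref{eq:identity-int}); near circular orbits $L \to L_{\max}(E)$ the period $T(E,L)$ and its derivatives must be controlled, using \eqref{eq:circular-frequencies} and the non-degeneracy $V_{\rm eff}''(r_*) > 0$ guaranteed by subharmonicity.

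Once $\cB_{\rm rad}$ is shown relatively trace class with respect to $-\cL_{\rm rad}^2$ (equivalently with respect to $1 - \cL_{\rm rad}^2$), the conclusion about the absolutely continuous spectrum is immediate from the Kato--Rosenblum theorem: $\cA_{\rm rad} = -\cL_{\rm rad}^2 + \cB_{\rm rad}$ and $-\cL_{\rm rad}^2$ have the same absolutely continuous part, and the spectrum of $-\cL_{\rm rad}^2$ is purely absolutely continuous off its kernel (this is stated in the excerpt, following Mathur).

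I expect the main obstacle to be the uniform control of the $\theta_r$-Fourier decay of the kernel \emph{together with} the $(E,L)$-integrability near the two degenerate regimes --- the edge $E = E_0$ where $\phi'$ is singular and the collapse of orbits for large $L$, and the circular-orbit limit where $T(E,L)$ degenerates. Concretely, the smoothness of $r(\cdot; E, L)$ in $\theta_r$ degrades as one approaches these limits, so one cannot simply cite "smooth $\Rightarrow$ rapid decay" with a uniform constant; one needs an explicit quantitative bound on, e.g., $\|\partial_{\theta_r}^m r(\cdot;E,L)\|_\infty$ in terms of $E, L$ and then balance the loss against the gain from the weight and the shrinking $r$-interval. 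A secondary technical point is identifying the correct Hilbert space (the $|\phi'|$-weighted space, restricted to odd-in-$v$, rotationally invariant functions) and checking that $\cB_{\rm rad}$ and $-\cL_{\rm rad}^2$ are honestly self-adjoint there with the claimed form --- but this is essentially recalled from \cite{hadvzic2022,kunze2021}.
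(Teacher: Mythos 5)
Your structural starting point is the right one --- the paper also exploits that $\cB_{\rm rad}$ factors through the spatial density, writing the (conjugated) perturbation as $\cL K_\phi \cL$ with $K_\phi = \mu^* R \mu$, where $\mu$ integrates over $\bvec v$ against $|\phi'|^{1/2}$ and $R$ is the Coulomb convolution on $B(0,R_0)$ --- and the final appeal to Kato--Rosenblum is exactly what the paper does. But the mechanism you propose for making the operator trace class is misdirected, and the step you flag as the ``main obstacle'' is where the argument, as you set it up, would not close. You attribute part of the gain to the resolvent factor in $k$: this is not available. Since the perturbation carries \emph{two} factors of $\cL$, each resolvent power $(1-\cL^2_{\rm rad})^{-1/2}$ is eaten by one factor of $\cL$, and $|\cL|(1-\cL^2_{\rm rad})^{-1/2}$ is merely \emph{bounded} (the symbol $|k\omega_r|(1+k^2\omega_r^2)^{-1/2}$ does not decay in $k$). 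So there is no $k$-summability to be extracted from the free operator, and consequently no need --- and no use --- for the quantitative Fourier decay of $r(\theta_r;E,L)$ and its uniform control near circular orbits and near $E=E_0$ that you identify as the crux. The entire trace-class property must come, and does come, from the spatial factor $K_\phi$ alone.

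The paper's actual proof of this step is short: in $d=3$, Newton's theorem reduces the Coulomb kernel on radially symmetric densities to $1/\max(r,r')$, so $K_\phi^{\rm rad}$ is a \emph{nonnegative} operator with the explicit continuous kernel $|\phi'(E(r,v))|^{1/2}|\phi'(E(r',v'))|^{1/2}/\max(r,r')$ on the reduced phase space $(r,v,w)$. Its trace is therefore the integral of the kernel on the diagonal,
\begin{equation*}
\Tr(K_\phi^{\rm rad}) = 8\pi^2 \int_0^\infty\!\!\int_0^\infty \bigl(-\partial_v \phi(E(r,v))\bigr)\, v\, r \di v \di r
= 8\pi^2 \int_0^\infty\!\!\int_0^\infty r\,\phi(E(r,v)) \di v \di r < \infty,
\end{equation*}
where the integration by parts in $v$ removes even the apparent singularity of $\phi'$ at $E_0$: finiteness requires only that $\phi$ be continuous with compact support. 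Both degenerate regimes you worry about (the edge $E=E_0$ and the circular-orbit limit) are thus irrelevant. The relative trace-class statement then follows by sandwiching: $(1-\cL^2_{\rm rad})^{-1/2}\cL K_\phi \cL (1-\cL^2_{\rm rad})^{-1/2}$ is trace class because the outer factors are bounded and $K_\phi$ is trace class. If you wish to keep your Hilbert--Schmidt factorization, the correct version is $R^{1/2}\mu\,\cL(1-\cL^2_{\rm rad})^{-1/2}$ with the Hilbert--Schmidt property supplied by $R^{1/2}\mu$ (equivalently by $\Tr(\mu^*R\mu)<\infty$, the same Newton computation), not by any decay in $k$.
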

For a more precise statement, see Section~\ref{sec:ac_spec}. The essence is that for \emph{many} radial perturbations, phase space mixing occurs. A complete identification of the spectral quality, however, would require to show that $\cA|_{\rm rad}$ has no singularly continuous spectrum, which is beyond the scope of this paper.
In the previous references \cite{hadvzic2022,kunze2021}, the relative compactness of $\cB$ was used to identify the essential spectrum of $\cA$. We recall this fact in Proposition~\ref{prop:Antonov} below. While relative compactness also holds without restricting to radial perturbations, the relative trace-class property only seems to hold in the radial subspace. 

In order to state our results on the point spectrum of $\cA$, we need to define
\begin{equation} \label{eq:def_omega_*}
    \omega_* = \min_{\Omega_0} \omega_r (E,L).
\end{equation}
and the function
\begin{equation}
    \rho_*(\bvec x) := \int_{\R^d} \frac{\phi'(E(\bvec x, \bvec v)) \omega_r^2(E(\bvec x, \bvec v),L(\bvec x, \bvec v))}{\omega_r^2(E(\bvec x, \bvec v),L(\bvec x, \bvec v)) - \omega_*^2} \di \bvec v.
\end{equation}
It is known (\cite{hadvzic2022}, or see Section~\ref{sec:preliminaries}  below) that $\sigma_{\rm ess}(\tilde \cA_{\rm rad})$ has a gap $(0,\omega_*^2)$.

\begin{theorem} \label{thm:eigenvalues-intro}
Under Assumption~\ref{hypothesis} 
\begin{enumerate}
    \item \label{it:intro-bound}
If $|\bvec x|^{-1} \rho_*(\bvec x) $ belongs to $L^1(B(0,R_0))$ and
    $$
    \norm{|\bvec x|^{-1} \rho_*(\bvec x) }_{L^1}\le m \in \N
    $$ 
    then $\cA_{\rm rad}$ has at most $m-1$ (possibly $0$) eigenvalues below $\omega_*^2$
        \item  \label{it:intro-divergence} If $\rho_*$ does not belong to $L^1$, and $\omega_*$ is not attained at the circular orbits, then $\cA_{\rm rad}$ has at least one eigenvalue below $\omega_*^2$, and the corresponding eigenfunction is odd in $v$. 
\end{enumerate}
\end{theorem}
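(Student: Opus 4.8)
The plan is to use the Birman--Schwinger principle to convert the question of eigenvalues of $\cA_{\rm rad}$ below $\omega_*^2$ into a spectral question for a family of compact operators. Concretely, for $\lambda < \omega_*^2$ write $\cA_{\rm rad} - \lambda = (-\cL^2_{\rm rad} - \lambda) + \cB_{\rm rad}$; since $-\cL^2_{\rm rad} - \lambda$ is invertible and bounded below on the relevant subspace (the gap $(0,\omega_*^2)$ being free of essential spectrum, by Proposition~\ref{prop:Antonov}), $\lambda$ is an eigenvalue of $\cA_{\rm rad}$ iff the Birman--Schwinger operator $K(\lambda) := |\cB_{\rm rad}|^{1/2}(-\cL^2_{\rm rad}-\lambda)^{-1}|\cB_{\rm rad}|^{1/2}$ (with the appropriate sign structure, since $\cB_{\rm rad}$ is sign-indefinite but of the form $\nabla_v f_0 \cdot \nabla_x (-\Delta)^{-1}$ acting after $\cL$, which one factors through a rank considerations using $\phi' < 0$) has eigenvalue $1$. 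The number of eigenvalues of $\cA_{\rm rad}$ below $\omega_*^2$, counted with multiplicity, equals the number of eigenvalues of $K(\lambda)$ that cross $1$ as $\lambda$ ranges over $(-\infty, \omega_*^2)$, which by monotonicity of $\lambda \mapsto K(\lambda)$ is controlled by $\lim_{\lambda \uparrow \omega_*^2} K(\lambda)$ (or its trace).

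For part~\eqref{it:intro-bound}, I would bound the number of eigenvalues by the trace (or operator norm) of the limiting Birman--Schwinger operator. The key computation is that, after passing to action-angle variables and expanding in the angular variable $\theta_r$, the operator $(-\cL^2_{\rm rad} - \lambda)^{-1}$ acts diagonally on Fourier modes $e^{ik\theta_r}$ with eigenvalue $(k^2\omega_r^2 - \lambda)^{-1}$, and the dangerous contribution as $\lambda \uparrow \omega_*^2$ comes from the $k = \pm 1$ modes near the orbits where $\omega_r$ is minimized. Summing the resulting kernel over these modes and integrating out the velocity should produce exactly the density $\rho_*(\bvec x)$ (note the $\omega_r^2/(\omega_r^2 - \omega_*^2)$ weight is precisely what appears when one takes $\lambda \to \omega_*^2$ in $\omega_r^2/(\omega_r^2 - \lambda)$ after subtracting the resonant piece, reflecting that the actual resonance is integrable under the stated hypothesis). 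The Newtonian potential $1/|\bvec x - \bvec y|$ then turns this into an integral operator whose Hilbert--Schmidt or trace norm is estimated by $\||\bvec x|^{-1}\rho_*\|_{L^1}$ via Young/Hardy-type inequalities; the bound $m$ on this norm yields at most $m-1$ eigenvalues after accounting for the fact that the first ``unit of mass'' is absorbed without creating a bound state. I would model this counting argument on the analogous statements in \cite{1990SM, hadvzic2022, hadzic2023}.

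For part~\eqref{it:intro-divergence}, the strategy is a variational one: exhibit a trial function $g \in \Dom(\cA_{\rm rad})$, odd in $\bvec v$, with $\langle g, \cA_{\rm rad} g\rangle / \|g\|^2 < \omega_*^2$. The natural candidate is built from the (generalized) eigenfunction of $-\cL^2_{\rm rad}$ at the bottom of the gap localized near the minimizing orbits, regularized so it lies in the form domain; the divergence of $\rho_*$ signals that the interaction term $\langle g, \cB_{\rm rad} g\rangle$ is infinitely negative along this family, which after the correct normalization pushes the Rayleigh quotient strictly below $\omega_*^2$. The hypothesis that $\omega_*$ is \emph{not} attained at circular orbits is what ensures the minimizing set is codimension one (rather than the single point $E = E_{\min}(L)$ for some $L$), so that the density of states $T(E,L)\,\dd E\,\dd L$ near the resonance is non-degenerate and the logarithmic-type divergence of $\rho_*$ genuinely translates into divergence of the relevant quadratic form; if $\omega_*$ were attained at a circular orbit, the corresponding perturbation need not be admissible. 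One then invokes the min-max principle together with $\inf\sigma_{\rm ess}(\cA_{\rm rad}) \cap [0,\omega_*^2) = \omega_*^2$ (the gap from Proposition~\ref{prop:Antonov}) to conclude that there is at least one eigenvalue below $\omega_*^2$, and the oddness in $\bvec v$ is inherited from the trial function (consistent with the reduction \eqref{eq:lin_VP_squared} to the odd part).

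The main obstacle I anticipate is handling the $k = \pm 1$ resonance carefully. Unlike in \cite{hadzic2023}, where fixing $L = L_0 > 0$ makes the relevant phase space a five-dimensional manifold, here the steady state is supported on an open set in six-dimensional phase space and includes the radial orbits $L = 0$ as well as the circular orbits $E = E_{\min}(L)$ on the boundary of the $(E,L)$-region (cf. Figure~\ref{fig:plano_E_L}); the co-area formula for the level set $\{\omega_r(E,L) = \omega_*\}$ and the integrability of the Birman--Schwinger kernel near it must be controlled uniformly, and the regularity of $\omega_r$ near $L = 0$ and near circular orbits (where \eqref{eq:circular-frequencies} gives the extension) needs to be tracked. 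Establishing that the singular part of $(k^2\omega_r^2 - \lambda)^{-1}$ for $k = \pm 1$ is the only obstruction to trace-class bounds — i.e., that all $|k| \ge 2$ modes contribute a uniformly bounded, in fact relatively trace-class (by Theorem~\ref{thm:trace_class_intro}) piece — and that this singular part is captured exactly by $|\bvec x|^{-1}\rho_*$, is the technical heart of both parts.
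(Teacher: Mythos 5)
For part~\eqref{it:intro-bound} your argument is essentially the paper's: one conjugates by $|\phi'|^{1/2}$ so that the perturbation becomes $\cL K_\phi\cL$ with $K_\phi=\mu^*R\mu\ge 0$ (this disposes of your worry about the sign-indefiniteness of $\cB$), forms $K(\lambda)=(-\cL^2-\lambda)^{-1/2}\cL K_\phi\cL(-\cL^2-\lambda)^{-1/2}$, and bounds the eigenvalue count by $\Tr K(\lambda)$. The paper does not need to split off the $|k|\ge 2$ modes as a separate trace-class remainder: it simply uses that $|k\omega|/\sqrt{k^2\omega^2-\lambda}$ is maximized at $k=\pm1$, so \emph{every} mode is dominated by the weight $\omega^2/(\omega^2-\lambda)$, and Newton's theorem then evaluates the trace as $\int|\bvec x|^{-1}\rho_*$ in the limit $\lambda\nearrow\omega_*^2$. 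Your ``first unit of mass is absorbed'' heuristic for the $m-1$ is loose; the correct reason is simply that each eigenvalue branch of $K(\lambda)$ that has crossed $1$ contributes more than $1$ to the limiting trace, so their number is strictly less than $m$.

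For part~\eqref{it:intro-divergence} your variational framing is workable (on the odd subspace $-\cL^2_{\rm rad}\ge\omega_*^2$, so $\sigma_{\rm ess}$ of the restriction starts at $\omega_*^2$ and min--max applies; note that on the full radial space $0\in\sigma_{\rm ess}(\cA_{\rm rad})$, so the restriction to odd functions is essential, not cosmetic), and your trial family --- the $k=1$ angular mode weighted by $\omega\sqrt{|\phi'|}/\sqrt{\omega^2-\lambda}$ and localized near the minimizing orbits --- is exactly the paper's. The genuine gap is the step ``the divergence of $\rho_*$ \ldots\ is infinitely negative along this family.'' The interaction term is not a local quantity in $g$: it is the Coulomb energy $\cD(\mu(\cL h_\lambda),\mu(\cL h_\lambda))$ of the spatial density $\int\cL h_\lambda\,\di\bvec v$, and the velocity integral carries the oscillating factor $\cos\theta_r$, which integrates to zero over a full orbit. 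Divergence of $\|g_\lambda\|^2$ therefore does \emph{not} by itself imply divergence of the interaction term; the density could cancel. This is the technical heart of the paper's Proposition~\ref{prop:divergence} and Lemma~\ref{lemma_technical}: one must choose the $(E,L)$-support $\hat\Lambda$ so that all the orbits involved share a fixed spatial annulus $(r_2,R_0)$ on which $\cos\theta_r\ge 1/2$, and then show (via the period formula~\eqref{eq:intro_frequencies}) that the contribution from this annulus is a fixed fraction of the whole, so the induced density stays bounded away from $0$ in $H^{-1}$ while $\|g_\lambda\|^2\to\infty$. This is also the true role of the hypothesis that $\omega_*$ is not attained at circular orbits: near a circular orbit $r_+-r_-\to 0$ and no such common annulus with a signed $\cos\theta_r$ exists, so the density of the trial function degenerates. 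Your explanation in terms of the codimension of the minimizing set and the density of states $T(E,L)\,\dd E\,\dd L$ is not the operative mechanism, and without the non-cancellation argument the proof of part~\eqref{it:intro-divergence} is incomplete.
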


This theorem gives a criterion, depending on the regularity of $\phi'$ and how $\omega_r$ approaches its minimum value, to distinguish two different scenarios. In the \emph{regular} case, the number of eigenvalues can be controlled and no eigenvalues occur when $\phi'$ is  \emph{small}. In the \emph{divergent} case, eigenvalues occur even for arbitrarily small $\phi'$. Whether they are finite in number or not, remains an open question.

In principle, this theorem gives no information when $|\bvec x|^{-1}\rho_*$ is not integrable, but $ \omega_*$ is attained at the circular orbits or $\rho_*$ is integrable.
However, we are not aware of realistic scenarios that belong to this case.
\medskip

For the circular orbits it can be shown from the expression in equation~\eqref{eq:circular-frequencies} that $\omega_r(E,L_{\max}(E))$ is decreasing in $E$.
 To check the hypothesis that $\omega_*$ is not attained at the circular orbits, one could for instance check that $\omega_r(E_0, L_{\max}(E_0)) > \omega_r(E_0 ,0)$.

On the other hand, if the minimum frequency is attained for some orbit inside the support of the steady state, then the denominator in the definition of $\rho_*$ diverges quadratically and the resulting function is never in $L^1$.  
Finally, for \emph{many} polytropic solutions, $\omega_r(E,L)$ is expected to be strictly increasing in $E$ and decreasing in $L$, such that $\omega_* = \omega(E_0,0)$. In this case, the factor $(\omega_r(E,L) - \omega_* )^{-1}$ diverges as
\begin{align}
    \left(\partial_E \omega_r(E_0,0)(E-E_0) + \frac{\partial_L^2\omega_r(E_0,0)}{2} L^2 \right)^{-1}.
\end{align}
A computation with this ansatz gives the following corollary.
\begin{corollary} \label{cor:polytropes}
Consider a steady state where $\phi(t) = [t-t_0]_+^n$ and $n \in (0, 7/2)$. Furthermore, assume that $\omega_* = \omega_r(E_0, 0)$ and that $\partial_E \omega_r(E_0,0) < 0$, and $\partial_L^2 \omega_r(E_0,0) > 0$. Then \ref{it:intro-bound} of Theorem~\ref{thm:eigenvalues-intro} holds.
\end{corollary}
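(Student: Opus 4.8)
Since $\omega_*=\min_{\Omega_0}\omega_r$ is the quantity that governs the gap, the content of the corollary is that its hypotheses force $|\bvec x|^{-1}\rho_*\in L^1(B(0,R_0))$; once this is shown, item~\ref{it:intro-bound} of Theorem~\ref{thm:eigenvalues-intro} applies (with any integer $m\ge\norm{\,|\bvec x|^{-1}\rho_*}_{L^1}+1$), which is the assertion. The whole proof is therefore a single integrability estimate. Throughout I would use that for the polytrope $|\phi'(E)|=n\,[E_0-E]_+^{n-1}$, that $\omega_r$ depends on the angular momentum only through $L^2$ (so that $\partial_L\omega_r(E_0,0)=0$), and — consistently with the discussion preceding the statement — that $\omega_*$ is attained on $\overline{\Omega_0}$ only at the corner $(E_0,0)$, so that $\omega_r^2-\omega_*^2$ stays bounded below outside any neighbourhood of $(E_0,0)$.

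First I would reduce to an integral over the $(E,L)$-region. Since $\rho_*$ is radial, I pass to the phase-space variables $(r,E,L)$ as in the computation leading to \eqref{eq:identity-int}, now keeping the $L$-dependence: writing $E=U(r)+v^2/2$ and $L=r v\sin\alpha$ (with $\alpha$ the angle between $\bvec x$ and $\bvec v$), one obtains
\begin{equation*}
\rho_*(\bvec x)=\frac{4\pi}{r^2}\int\!\!\int_{\{V_{\rm eff}(r)\le E\le E_0\}}\frac{\phi'(E)\,\omega_r^2(E,L)\,L}{\bigl(\omega_r^2(E,L)-\omega_*^2\bigr)\sqrt{2(E-V_{\rm eff}(r))}}\,\di E\,\di L ,
\end{equation*}
where $V_{\rm eff}(r)=U(r)+L^2/(2r^2)$ and $\{V_{\rm eff}(r)\le E\}=[r_-(E,L),r_+(E,L)]$. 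The Coulomb weight $|\bvec x|^{-1}$ cancels exactly one power of $r$ (against the $r^{-2}$ here and the $r^2$ from $\di\bvec x$), so by Tonelli
\begin{equation*}
\norm{\,|\bvec x|^{-1}\rho_*}_{L^1(B(0,R_0))}=16\pi^2\int\!\!\int_{\mathcal R}\frac{|\phi'(E)|\,\omega_r^2(E,L)}{\omega_r^2(E,L)-\omega_*^2}\,L\,J(E,L)\,\di E\,\di L ,\qquad J(E,L):=\int_{r_-(E,L)}^{r_+(E,L)}\frac{\di r}{r\sqrt{2(E-V_{\rm eff}(r))}},
\end{equation*}
over the region $\mathcal R=\{(E,L):U(0)\le E\le E_0,\ 0\le L\le L_{\max}(E)\}$ sketched in Figure~\ref{fig:plano_E_L}. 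The role of the Coulomb weight is precisely this cancellation, which replaces a genuinely singular kernel by the mild transport integral $J$.

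Next I would localise. The integrand is locally integrable on $\mathcal R$ away from the two endpoints $(U(0),0)$ and $(E_0,0)$ of the axis $L=0$ — in particular the edge singularity $|\phi'(E)|=n[E_0-E]^{n-1}$ along $\{E=E_0\}$ is harmless since $n>0$ — and, by Point~\ref{hyp:on_U_ext} of Assumption~\ref{hypothesis}, $J$ is bounded except where $r_-(E,L)\to 0$, which happens exactly as $L\to0$. Splitting $[r_-,r_+]$ at $2r_-$ and at a cutoff $\delta\sim\sqrt{E-U(0)}$, using $U(r)=U(0)+O(r^2)$ near $r=0$ and bounding the outermost piece by $r_-^{-1}$ times half a period ($\le\pi/\omega_*$), I expect $J(E,L)\le C\,(E-U(0))^{-1/2}\bigl(1+|\log L|\bigr)$. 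Near $(U(0),0)$ the factors $|\phi'|$ and $(\omega_r^2-\omega_*^2)^{-1}$ are bounded, and a direct computation with $V_{\rm eff}$ near $r=0$ gives that the small-$L$ circular orbit sits at energy $E_{\min}(L)=U(0)+\sqrt{U''(0)}\,L+o(L)$, forcing $L\le C(E-U(0))$ on $\mathcal R$; hence that corner contributes at most $\int_0^{\varepsilon}(E-U(0))^{-1/2}\int_0^{C(E-U(0))}L\bigl(1+|\log L|\bigr)\,\di L\,\di E\lesssim\int_0^{\varepsilon}(E-U(0))^{3/2}\bigl(1+|\log(E-U(0))|\bigr)\,\di E<\infty$.

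The remaining, and most delicate, region is the corner $(E_0,0)$, and this is where the sign hypotheses are used. Given the differentiability hypothesized in the statement together with the standard smoothness of the period function at the regular orbit $(E_0,0)$ (not a separatrix, not a circular orbit), one has $\omega_r(E,L)-\omega_*=a(E_0-E)+bL^2+o(|E_0-E|+L^2)$ with $a=-\partial_E\omega_r(E_0,0)>0$ and $b=\tfrac12\partial_L^2\omega_r(E_0,0)>0$, hence $\omega_r^2-\omega_*^2\gtrsim a(E_0-E)+bL^2$ there. Writing $s=E_0-E$ and doing the $L$-integral first via $u=L^2$ gives $\int_0^{\varepsilon}L\bigl(1+|\log L|\bigr)(as+bL^2)^{-1}\,\di L\lesssim\log^2(1/s)$, so that corner contributes at most $\int_0^{\varepsilon}s^{\,n-1}\log^2(1/s)\,\di s$, which is finite for every $n>0$, hence for all $n\in(0,7/2)$; the upper bound $n<7/2$ only makes the polytrope an admissible steady state (Assumption~\ref{hypothesis}) and plays no role in the estimate. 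Collecting the three pieces gives $|\bvec x|^{-1}\rho_*\in L^1(B(0,R_0))$, so item~\ref{it:intro-bound} of Theorem~\ref{thm:eigenvalues-intro} applies. The main obstacle I anticipate is the behaviour along the axis $L=0$, i.e.\ for radial orbits through the origin, where $r_-\to0$ and the Coulomb weight is most singular: one must check that the resulting $1/r$-singularity of $J$ is only logarithmic in $L$ and is absorbed by the $L\,\di L$ of the phase-space measure. By contrast the competing singularity $(\omega_r^2-\omega_*^2)^{-1}$ at $(E_0,0)$ is handled cleanly by the quadratic expansion and is integrable against that measure — rather than the quadratically divergent one that would arise if $\omega_*$ were attained at an interior orbit of $\Omega_0$, the case ruled out in the discussion preceding the statement.
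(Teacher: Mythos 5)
Your argument is correct and reaches the same conclusion, but by a genuinely different route from the paper. The paper first bounds $\rho_*(r)$ \emph{pointwise}: it reduces to $\tilde\rho(r)=\int (E_0-E)^{n-1}\bigl((E_0-E)+cL^2\bigr)^{-1}\di\bvec v$, changes variables to $t=L^2/(2r^2(E-U(r)))$ and then to $\alpha=2cr^2(E-U(r))/(E_0-E)$, evaluates the inner integral in closed form (a $\tanh^{-1}$), and extracts the bound $\tilde\rho(r)\le C\,r^{2-2s}(E_0-U(r))^{n-1/2}$ for $s\in(0,\min(n,1))$, after which the $r$-integration is immediate. You instead apply Tonelli to integrate over $r$ \emph{first} at fixed $(E,L)$, producing the orbit integral $J(E,L)=\int_{r_-}^{r_+}\frac{\di r}{r\sqrt{2(E-V_{\rm eff})}}$, and then do a corner analysis of the $(E,L)$-integral: a logarithmic bound on $J$ as $L\to0$, the constraint $L\le C(E-U(0))$ near the inner corner, and the quadratic expansion $\omega_r-\omega_*\gtrsim a(E_0-E)+bL^2$ near $(E_0,0)$ yielding $\int_0^\varepsilon s^{n-1}\log^2(1/s)\di s<\infty$. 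Both proofs hinge on exactly the same quadratic lower bound supplied by the sign hypotheses (the paper asserts it globally on $\Omega_0$ with the same implicit assumption you make, namely that $\omega_*$ is attained only at $(E_0,0)$), and both correctly identify that $n<7/2$ plays no role beyond admissibility of the steady state. Your route buys a more transparent picture of where the two competing singularities live (the $1/r$ Coulomb weight on radial orbits versus the resonance denominator at the corner), at the cost of several splittings that you state heuristically (``I expect $J\le C(E-U(0))^{-1/2}(1+|\log L|)$''); these estimates are correct but would need to be written out, whereas the paper's closed-form computation packages them into a single explicit integral in $\alpha$. No gap in substance.
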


At first sight, this result might seem at odds with the one from \cite{hadzic2023} where the Birman-Schwinger operator is unbounded for exponents $n<1$. However, in the case studied there, the steady state is of a distribution of the form $\widetilde\phi (E,L) =(E_0-E)_+^n \delta(L-L_0)$, and as such, more singular than the case under consideration here.

\subsection{Outline of the paper}
In Section~\ref{sec:preliminaries}, we define the operators, study their basic properties and essential spectrum, and fix some notation that will be used throughout the paper. Then, we specify to the radially symmetric subspace. In Section~\ref{sec:ac_spec}, we give the short proof of Theorem~\ref{thm:trace_class_intro}. The main technical part of the paper is Section~\ref{sec:EVs} where the Birman-Schwinger operator is introduced in order to prove Theorem~\ref{thm:eigenvalues-intro}. The final Section~\ref{sec:polytropes} contains the estimates needed for Corollary~\ref{cor:polytropes}.

\section{Preliminaries} \label{sec:preliminaries}

In this section, we look at the basic properties of $\cA$, defined as a differential expression in \eqref{eq:lin_VP_squared}. As a first step, we need to realize it as a self-adjoint operator on a domain in a suitable Hilbert space. 

 For our purposes, it is sufficient to realize $-\cL^2$ as the Friedrichs' extension of $\langle \cL f, \cL g \rangle_{L^2(\Omega_0)}$ initially defined on functions in $C^\infty(\Omega_0)$. For the identification of the domain of $\cL$, see \cite{hadvzic2022}.

Next, we rewrite the interaction term $\cB$ from \eqref{eq:lin_VP_squared} in a convenient form.
We use that
$$
\nabla_{\bvec v} f_0(\bvec x,\bvec v) = \bvec v \phi'(E(\bvec x,\bvec v)).
$$
Hence,
$$
\cB f(\bvec x,\bvec v) = \phi'(E(\bvec x,\bvec v))(\bvec v \cdot \bvec \nabla_{\bvec x}) \int \frac{1}{|\bvec x - \bvec y|}\int \cL f(\bvec y , \bvec u) \di \bvec  u \di \bvec y .
$$
After integration by parts, several applications of Fubini's Theorem and the commutation of $\cL$
with the function $\phi'(E)$, we can write the quadratic form in a convenient form
$$
\langle g, \cB f \rangle =  \cD \left( \int  |\phi'(E(\cdot, \bvec v))| \cL g(\cdot, \bvec v) \di \bvec v, \int   \cL f(\cdot, \bvec v) \di \bvec v \right),
$$
where the Coulomb-inner product is defined by
\begin{equation} \label{eq:def_Coulomb}
    \cD(\rho_1, \rho_2) := \int_{\R^d} \int_{\R^d} \frac{\bar \rho_1(\bvec{x}) \rho_2(\bvec{x})}{|\bvec{x} - \bvec{y}|} \di \bvec{x} \di \bvec{y}.
\end{equation}

In order to work with a manifestly symmetric expression, we define the multiplication operator
$$
A f= |\phi'|^{-1/2}(E) f
$$
and to simplify notation, 
define $\mu$ from $L^2(\Omega_0)$ to $ L^2(B(0,R_0))$ by
  \begin{equation} \label{eq:def_mu}
        \mu (h)(\bvec{x}) = \int |\phi'|^{1/2}(E(\bvec x, \bvec v)) h(\bvec x, \bvec v) \di \bvec v.
  \end{equation}
With this notation, we obtain
$$
\langle g, A \cB A^{-1}f \rangle =  \cD \left( \mu(\cL g), \mu(\cL f) \right).
$$
This motivates the definition
\begin{equation} \label{eq:def_operator}
    \tilde \cA : = A \cA A^{-1}= -\cL(1 - K_\phi) \cL \quad \text{ with } K_\phi f  = |\phi'|^{1/2} U_{|\phi'|^{1/2} f}.
\end{equation}
Note that, if an (odd) function $g$ solves \eqref{eq:lin_VP_squared}, then
$\tilde g := A g$ solves the equation
$$
\partial_t^2 \tilde g =  - \tilde \cA \tilde g.
$$
In the remainder of the paper, we study the spectral properties of $\tilde \cA$ as an unbounded operator in $L^2(\Omega_0)$.
We first show that $\tilde A$ is a self-adjoint operator, as a relatively compact perturbation of $-\cL^2$.

\begin{proposition}\label{prop:Antonov}
    The operator $K_\phi$ as defined above, is Hilbert-Schmidt, and we have Antonov's bound $\norm{K_\phi |_{\Ran (\cL) }} \le 1$.
\end{proposition}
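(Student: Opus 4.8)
The plan is to establish the two claims separately but by the same computation. First I would obtain an explicit integral-kernel representation for $K_\phi$. By definition $K_\phi f = |\phi'|^{1/2} U_{|\phi'|^{1/2} f}$, so for $f \in L^2(\Omega_0)$,
\[
(K_\phi f)(\bvec x, \bvec v) = -|\phi'(E(\bvec x,\bvec v))|^{1/2} \int_{\Omega_0} \frac{|\phi'(E(\bvec y,\bvec u))|^{1/2}}{|\bvec x - \bvec y|} f(\bvec y, \bvec u)\, \di \bvec y\, \di \bvec u,
\]
which is an integral operator with kernel
\[
k(\bvec x,\bvec v;\bvec y,\bvec u) = -\frac{|\phi'(E(\bvec x,\bvec v))|^{1/2}\,|\phi'(E(\bvec y,\bvec u))|^{1/2}}{|\bvec x - \bvec y|}.
\]
To prove the Hilbert--Schmidt property I would show $k \in L^2(\Omega_0 \times \Omega_0)$, i.e.
\[
\int_{\Omega_0}\int_{\Omega_0} \frac{|\phi'(E(\bvec x,\bvec v))|\,|\phi'(E(\bvec y,\bvec u))|}{|\bvec x - \bvec y|^2}\,\di\bvec x\,\di\bvec v\,\di\bvec y\,\di\bvec u < \infty.
\]
Integrating out $\bvec v$ and $\bvec u$ using $\mu$ (more precisely, using $\int |\phi'(E(\bvec x,\bvec v))|\,\di\bvec v =: g(\bvec x)$, which is finite and bounded since $\phi' \circ E \in L^1(\Omega_0)$ by Assumption~\ref{hypothesis}\ref{hyp:on_phi'} and, being a decreasing function of energy, is in fact in $L^\infty_{\rm loc}$ after the $\bvec v$-integration — here one uses the radial change of variables as in \eqref{eq:identity-int}), this reduces to the finiteness of $\int_{B(0,R_0)}\int_{B(0,R_0)} g(\bvec x) g(\bvec y) |\bvec x - \bvec y|^{-2}\,\di\bvec x\,\di\bvec y$, which follows from the Hardy--Littlewood--Sobolev inequality (the kernel $|\bvec x-\bvec y|^{-2}$ in $\R^3$ is exactly borderline, $\lambda = 2$, $1/p + 1/q + \lambda/d = 2$ with $p=q=3/2$) together with $g \in L^{3/2}(B(0,R_0))$, which holds since $g$ is bounded with compact support. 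This gives $\|K_\phi\|_{\rm HS}^2 < \infty$.

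For Antonov's bound, the key is the quadratic-form identity derived just above the proposition, which when specialized gives, for $f$ in the range of $\cL$ (say $f = \cL h$),
\[
\langle f, K_\phi f\rangle = \Scalprod{f}{|\phi'|^{1/2} U_{|\phi'|^{1/2} f}} = -\cD\!\left(\mu(f),\mu(f)\right) \cdot(\text{sign convention}),
\]
and one must compare this with $\|f\|^2$. The honest statement one proves is $0 \le \langle f, K_\phi f\rangle \le \|f\|^2$ on $\Ran(\cL)$: positivity because $\cD$ is a positive-definite (Coulomb) quadratic form — this is the classical fact that $-\cD(\rho,\rho) = \int |\nabla U_\rho|^2/(4\pi) \ge 0$ up to the sign coming from the gravitational convention — and the upper bound $\langle f, K_\phi f\rangle \le \|f\|^2$ is Antonov's coercivity estimate. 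The cleanest route to the latter is to write $f = \cL h$ with $h$ odd in $\bvec v$, integrate by parts to move one $\cL$ onto the potential term, and use the pointwise inequality $|\nabla_{\bvec v} f_0|^2 \le C\, (\cL^2$-type control$)$ — concretely, Antonov's original argument uses that for the stationary state $\phi\circ E$ one has the identity exploiting $\cL E = 0$ and $\phi' < 0$ to produce a perfect-square remainder. I would reproduce this: expand $\langle \cL h, K_\phi \cL h\rangle$, integrate by parts in the angle variables using the action–angle form $\cL = \omega_r \partial_{\theta_r} + \omega_\varphi \partial_{\theta_\varphi}$, and bound the result by $\langle \cL h, \cL h\rangle$, the surplus being manifestly $\ge 0$.

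The main obstacle I anticipate is the Antonov bound rather than the Hilbert--Schmidt estimate: getting the constant to be exactly $1$ (not just finite) requires the precise algebraic cancellation in Antonov's inequality, and one has to be careful that the integration by parts is justified on the Friedrichs form-domain and that the boundary terms on $\partial\Omega_0$ vanish (they do, because $\phi$ — hence $|\phi'|^{1/2}$ as a weight, or rather the density $\mu$ — vanishes at the boundary of the support). Since the excerpt says ``see \cite{hadvzic2022}'' for the domain of $\cL$, I would cite that reference for the density of smooth compactly supported functions in the form domain and carry out the integration by parts there, then pass to the limit. The Hilbert--Schmidt part, by contrast, is a soft $L^2$-kernel estimate that only uses Assumption~\ref{hypothesis} and HLS, and should be essentially routine.
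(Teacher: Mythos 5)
The Hilbert--Schmidt part of your argument is correct and essentially equivalent to the paper's: the paper factors $K_\phi = \mu^* R\mu$ with $\mu$ bounded (by the same finiteness of $g(\bvec x):=\int|\phi'(E(\bvec x,\bvec v))|\di\bvec v$ that you use) and $R$, convolution with $1/|\bvec x|$ restricted to $B(0,R_0)$, Hilbert--Schmidt; this amounts to the same integral $\int\int g(\bvec x)g(\bvec y)|\bvec x-\bvec y|^{-2}\di\bvec x\di\bvec y$ that you estimate. Your appeal to Hardy--Littlewood--Sobolev is not even needed: $g$ is bounded with compact support and $|\bvec x-\bvec y|^{-2}$ is locally integrable in $\R^3$, so the double integral is finite by inspection.

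The Antonov bound is where there is a genuine gap. You correctly identify the positivity of the quadratic form and that the content is $\langle \cL h, K_\phi\cL h\rangle\le\|\cL h\|^2$, but your plan --- integrate by parts in action--angle variables and exhibit a ``perfect-square remainder'' bounded by $\|\cL h\|^2$ --- is a placeholder rather than an argument, and it is not the mechanism that actually works. The paper's proof has two separate ingredients, neither of which appears in your sketch. For radially symmetric $u$ it conjugates by the multiplication operator $T=\bvec x\cdot\bvec v$ and uses the identity $\|\cL u\|^2=\|T\cL T^{-1}u\|^2+\int\bigl(4\pi(\rho_0+\rho_{\rm ext})+U'/|\bvec x|\bigr)\int|u|^2$, then controls $\langle\cL u,K_\phi\cL u\rangle=\tfrac{1}{4\pi}\|\nabla U_{|\phi'|^{1/2}\cL u}\|^2$ via Newton's theorem, Cauchy--Schwarz, and the identity $\int v_r^2|\phi'(E)|\di\bvec v=\rho_0$; the comparison with $\|\cL u\|^2$ then rests on the subharmonicity of $U$ from Assumption~\ref{hypothesis}. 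For non-radial functions the bound is not obtained from the range of $\cL$ at all: nonzero eigenvalues of $K_\phi$ have eigenfunctions of the form $|\phi'|^{1/2}\chi(\bvec x)$, and after projecting on spherical harmonics one compares the resulting radial ODE with the equation satisfied by $U_{f_0}'$ (the translation modes of the steady state) through a Sturm--Liouville/Wronskian argument. This second sector is entirely absent from your proposal, and it is precisely where the constant $1$ is sharp --- translations saturate it when $U_{\rm ext}=0$ --- so no argument producing a ``manifestly nonnegative surplus'' uniformly over $\Ran(\cL)$ can succeed there without invoking this structure.
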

\begin{proof}
 From the Cauchy-Schwarz inequality, we see that the operator $\mu$ is bounded from $L^2(\Omega_0)$ to $L^2(B(0,R_0))$ by
 $$
\sup_{\bvec x \in B(0,R_0)} \int |\phi'(E((\bvec x, \bvec v))| \di \bvec v,
 $$
 provided that this expression is finite. 
 We compute
\begin{align*}
    \int |\phi'(E((\bvec x, \bvec v))| \di \bvec v &= -4 \pi \int_0^{\infty} \phi'(v^2/2 + U(\bvec x)) v^{2} \di v \\
    &= - 4\pi \int_{U(\bvec x)}^{E_0} \phi'(E) \left(2(E - U(\bvec x))\right)^{1/2} \di E \\
    &\le - 4 \pi \int_{U(0)}^{E_0} \phi'(E) \left(2(E - U( 0))\right)^{1/2} \di E ,
   \end{align*}
 which is indeed finite by \eqref{eq:identity-int}.
 
 Now $K_\phi = \mu^* R \mu$, where $R$ is convolution with $1/|\bvec x|$ restricted to $B(0,R_0)$. The kernel belongs to $L^2(B(0,R_0) \times B(0,R_0))$, hence the operator is Hilbert-Schmidt. 

For Antonov's bound, we use separate arguments to study radially symmetric functions in the range of $\cL$, and non-radial functions without any orthogonality condition. 

{\noindent \bf Radially symmetric functions.} The argument for radially symmetric functions can be found in details in \cite[Chapter 2]{kunze2021} or the Appendix of \cite{lemou2011new}, while the method goes back to \cite{kandrup1985simple, perez1996stability}.
Since these authors do not include an external potential, we sketch the steps to check that the proof carries over to this case without modifications.
First, we define the auxiliary multiplication operator $T:= (x\cdot v)$.
A calculation shows that
$$
[\cL, T] = |v|^2 - |x| U'(x), \qquad [\cL,[\cL, T]] = - T(4\pi (\rho_{0} + \rho_{\rm ext}) + U'(x)/|x|).
$$
The key identity is (see the Appendix of \cite{lemou2011new} for the delicate justification of the integration by parts)
\begin{align*}
    \norm{\cL u}^2 
    &= \norm{T\cL T^{-1}u}^2  \\
   & \qquad + \int_{\R^3}(4\pi (\rho_{0}(\bvec x)+ \rho_{\rm ext}(\bvec x)) + U'(\bvec x)/|\bvec x|)\int_{\R^3}|u|^2(\bvec x, \bvec v) \di \bvec v \di \bvec x,
\end{align*}
where we have defined $\rho_{\rm ext} = \Delta U_{\rm ext}/(4\pi) \ge 0$.
The term that we have to control is 
$$
\inprod{\cL u}{K_\phi \cL u} = D(|\phi'|^{1/2} \cL u, |\phi'|^{1/2} \cL u) = \frac{1}{4 \pi}\norm{\nabla U_{|\phi'|^{1/2} \cL u}}^2.
$$
Using Newton's theorem, the definition of $\cL$, and the divergence theorem twice, and the definition $v_r = \bvec v \cdot \hat{ \bvec x}$ simplifies the integrand to 
\begin{align*}
    \nabla U_{|\phi'|^{1/2} \cL u}(\bvec x)
     &= 4 \pi \hat{\bvec  x}\int_{\R^3} v_r  (|\phi'|^{1/2}u)(\bvec x, \bvec v) \di \bvec v.
\end{align*}
The Cauchy-Schwarz inequality gives
\begin{align*}
    |\nabla U_{|\phi'|^{1/2} \cL u}|^2(\bvec x)& \le (4\pi)^2 \int_{\R^3} v_r^2 |\phi'|(\bvec x, \bvec v) \di \bvec v \int_{\R^3} |u|^2(\bvec x,\bvec v)) \di \bvec v,
\end{align*}
and one last application of the divergence theorem yields
\begin{align*}
 \int_{\R^3} v_r^2 |\phi'|(E(x,v)) \di v 
 &= -  \int_{\R^3}v_r (\hat x \cdot\nabla_v) \phi(E(x,v)) \di v\\
 & = \int_{\R^3} (\hat x \cdot\nabla_v) v_r \phi(E(x,v)) \di v\\
 &= \rho_0(x).
\end{align*}
Hence, we obtain
\begin{align*}
   \inprod{\cL u}{K_\phi \cL u} \le 4\pi \int_{\R^3}  \rho_0(x) \int_{\R^3} |u|^2(x,v) \di v < \norm{\cL u}^2 . 
\end{align*}
Since $K_\phi$ is a compact operator, this implies the bound $\norm{K^{\rm rad}_\phi|_{\Ran(\cL)}} < 1$. 

{\noindent \bf Nonradial functions.}
Assume that $\lambda \neq 0$ is an eigenvalue of $K_\phi$ with associated eigenfunction $g$. The eivenvalue equation implies that $g(\bvec x,\bvec v) = |\phi'(E(\bvec x,\bvec v))|^{1/2} \chi(\bvec x)$ for some function $\chi$ depending on the space-variable only. The equation for $\chi$ is
 \begin{align*}
     \lambda \chi(\bvec x) 
     &= \int_{\R^3} \frac{\chi(\bvec y)}{|\bvec x - \bvec y|} \int_{\R^3}|\phi'(E(\bvec y,\bvec v))| \di \bvec v \di \bvec y 
     &=\int_{B(0,R_0)} \frac{\chi(\bvec y)}{|\bvec x - \bvec y|} \rho_{|\phi'|}(\bvec y) \di \bvec y ,
 \end{align*}
If $g$ is orthogonal to functions invariant under rotations of the full phase-space, then $\chi$ is orthogonal to radial functions. After projecting on spherical harmonics, we may assume that $\chi(\bvec x) = j(r) Y_l^m(\theta, \phi)$, where $Y_l^m$ are the standard spherical harmonics, $\bvec x$ has spherical coordinates $(r,\theta, \phi)$ and the orthogonality to radial functions translates as $l\neq 0$. This also implies that $\chi$ extends to an $L^2$-function in all of $\R^3$ (since in particular, $\int_{B(0,R_0)} \chi \rho_{|\phi'|} = 0 $).   Applying $-\Delta$ to both sides of the equation shows that $\chi$ is an eigenfunction of a Schrödinger operator
 \begin{equation*} 
      \left( -\Delta - \frac{4\pi}{\lambda} \rho_{|\phi'|} \right)\chi =  0,
 \end{equation*}
 or equivalently
  \begin{equation} \label{eq:for_j}
      j''(r)  + \frac{2}{r}   j'(r) = V_{l, \lambda}(r) j(r) \quad \text{ with } V_{l,\lambda}(r) := \frac{l(l+1)}{r^2} - \frac{4\pi}{\lambda} \rho_{|\phi'|}(r).
 \end{equation}
The Vlasov--Poisson equation provides us with a comparison function. The equation for the potential reads
 \begin{align*}
      U_{f_0}''(r) + \frac{2}{r} U_{f_0}'(r) =  4 \pi \rho_{f_0}(r) = 4\pi \int_{\R^3}\phi(E(r,\bvec v)) \di \bvec v. 
 \end{align*}
 Taking one more radial derivative gives the following equation for $h(r):= U_{f_0}'(r)$
  \begin{align*}
     h''(r) + \frac{2}{r} h'(r)  -\frac{2}{r^2}h(r) & = 4 \pi \int_{\R^3}\phi'(E(r,\bvec v)) (U'_{f_0}(r) +U'_{\rm ext }(r)) \di \bvec v \\
     &=- 4\pi \rho_{|\phi'|}(r)(h(r) +U'_{\rm ext }(r))
 \end{align*}
or
  \begin{equation} \label{eq:for_h}
      h''(r)  +\frac{2}{r} h'(r) = V_{1,1} h(r) -4 \pi \rho_{|\phi'|(r)}U'_{\rm ext }(r). 
 \end{equation}
 In the translation invariant case $U'_{\rm ext} \equiv 0$, $h$ is a solution of \eqref{eq:for_j} in the case $\lambda=l=1$ and this is just an implication of the fact that translations of the stationary state remain stationary. In any case, a standard comparison of solutions to Sturm-Liouville problems \eqref{eq:for_j} and \eqref{eq:for_h} shows that $\lambda  \le 1$ and that the equality case only occurs when $j=h$. To see this, define $W = h' j - j' h $ and compute
 $$
 (r^2 W(r))' = \bigl(V_{1,1}(r) - V_{l,\lambda}(r)\bigr) h(r) j(r)  -4 \pi \rho_{|\phi'|(r)}U'_{\rm ext }(r) j(r).
 $$
 Assume that $\lambda >1$, such that $V_{1,1} < V_{l,\lambda}$.  Since $h$ is positive, this shows that $(r^2 W)'$ has the sign of $- j$.
 If $j$ has no zeroes, we use that $r^2 W$ vanishes at $0$ and $+\infty$ and arrive at a contradiction. Else, define $r_*$ to be the first zero of $j$ and assume that $j>0$ on $(0, r_*)$. Then $W(r_*)= - j'(r_*) h(r_*) < 0$ and we arrive, again, at a contradiction.
 \end{proof}

 \begin{corollary}
     The sum $-\cL^2 + \cL K_\phi \cL$ can be realized as a self-adjoint operator, which is relatively compact with respect to $-\cL^2$. We will denote this operator by $\widetilde A$ throughout the paper.
 \end{corollary}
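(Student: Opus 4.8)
The plan is to construct $\widetilde{A}$ from its quadratic form via the KLMN theorem and then read off relative compactness from a ``sandwiched'' resolvent identity. Concretely, write $\mathfrak{a}_0[f]:=\norm{\cL f}^2$ for the closed non-negative form of $-\cL^2$, with form domain $Q_0:=Q(-\cL^2)=D((-\cL^2)^{1/2})$, and $b[f]:=\langle \cL f,K_\phi\cL f\rangle$ for the perturbation, so that the differential expression $-\cL^2+\cL K_\phi\cL=-\cL(1-K_\phi)\cL$ is the one associated with $\mathfrak{a}:=\mathfrak{a}_0-b$ on $Q_0$. Since $K_\phi=\mu^*R\mu$ with $R\ge 0$ (the Coulomb kernel is positive) and $\norm{K_\phi|_{\Ran\cL}}\le 1$ by Proposition~\ref{prop:Antonov}, while $\cL f$ lies in (the closure of) $\Ran\cL$, one gets $0\le b[f]\le\mathfrak{a}_0[f]$, hence $\mathfrak{a}\ge 0$ on $Q_0$.

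\emph{Self-adjointness.} The point requiring care is that $\norm{K_\phi|_{\Ran\cL}}\le 1$ need not be strict (for $U_{\rm ext}'\equiv 0$ it is saturated by $h=U_{f_0}'$ at $l=1$), so KLMN cannot be applied with a relative form-bound $<1$ directly. Instead I would check that $b$ is $\mathfrak{a}_0$-form-compact: if $(f_n)\subset Q_0$ is bounded in the form norm and $f_n\rightharpoonup 0$ in $L^2(\Omega_0)$, then $\cL f_n$ is bounded and $\cL f_n\rightharpoonup 0$ (integrate by parts against $C^\infty(\Omega_0)$, a form core, using the realization of $\cL$ from \cite{hadvzic2022}), so $K_\phi\cL f_n\to 0$ in norm because $K_\phi$ is Hilbert--Schmidt, and therefore $b[f_n]\to 0$. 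A relatively form-compact perturbation has $\mathfrak{a}_0$-relative form-bound $0$, so KLMN (see e.g. \cite{Teschl}) shows that $\mathfrak{a}$ is closed and bounded below and hence represents a unique self-adjoint operator $\widetilde{A}$; it agrees with $-\cL^2+\cL K_\phi\cL$ on $C^\infty(\Omega_0)$ and satisfies $\widetilde{A}\ge 0$ by the first paragraph.

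\emph{Relative compactness.} Put $G:=(-\cL^2+1)^{-1/2}$ and $W:=\cL G$; from $\norm{Wu}^2=\mathfrak{a}_0[Gu]=\langle u,(1-G^2)u\rangle$ the operator $W$ is bounded with $\norm{W}\le 1$ and $W^*W=1-G^2$. Then $\widehat{B}:=W^*K_\phi W$ is bounded, self-adjoint and compact, inheriting compactness from $K_\phi$, and $1\notin\sigma(\widehat{B})$: if $\widehat{B}\xi=\xi$ with $\norm{\xi}=1$, then $W\xi\in\Ran\cL$ together with Antonov's bound gives $1=\langle W\xi,K_\phi W\xi\rangle\le\norm{W\xi}^2=\langle\xi,(1-G^2)\xi\rangle<1$, a contradiction, and for a compact self-adjoint operator every nonzero spectral value is an eigenvalue. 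Unwinding the representation of $\mathfrak{a}$ on $\Ran G$ yields $(\widetilde{A}+1)^{-1}=G(1-\widehat{B})^{-1}G$, whence
\[
(\widetilde{A}+1)^{-1}-(-\cL^2+1)^{-1}
= G\bigl[(1-\widehat{B})^{-1}-1\bigr]G
= G(1-\widehat{B})^{-1}\widehat{B}\,G ,
\]
which is compact, being the product of the compact operator $\widehat{B}$ with bounded operators. Hence $\widetilde{A}$ is a relatively compact perturbation of $-\cL^2$.

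The main obstacle is precisely the non-strictness of Antonov's bound: the argument is forced to use the Hilbert--Schmidt (hence compact) character of $K_\phi$ twice, once to obtain relative form-bound $0$ for KLMN and once to make $\widehat{B}$ compact, and the strict inequality $\langle\xi,(1-G^2)\xi\rangle<1$ is exactly what keeps $1$ out of $\sigma(\widehat{B})$, which the resolvent identity requires. Everything else (positivity of $R$, the identity $W^*W=1-G^2$, the resolvent formula) is routine.
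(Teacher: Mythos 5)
Your proof is correct, but it follows a genuinely different route from the paper's. The paper's argument is very short: it invokes Proposition~\ref{prop:Antonov} to claim that the form $\inprod{\cL g}{K_\phi \cL g}$ is relatively bounded with respect to $\inprod{\cL g}{\cL g}$ with bound $\lambda<1$ --- ``except, possibly in the case $U_{\rm ext}=0$, on a three-dimensional subspace'' where Antonov's bound saturates on the translation modes --- and then realizes $\widetilde A$ as a Friedrichs extension; the relative-compactness assertion is left implicit in the Hilbert--Schmidt property of $K_\phi$. You sidestep the saturation issue entirely: by upgrading compactness of $K_\phi$ to relative \emph{form-compactness} of the perturbation you obtain relative form bound $0$, so KLMN applies with no case distinction and no finite-dimensional exceptional subspace to worry about. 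You then make the relative-compactness claim fully explicit through the sandwiched operator $\widehat B=W^*K_\phi W$ and the resolvent identity $(\widetilde A+1)^{-1}-(-\cL^2+1)^{-1}=G(1-\widehat B)^{-1}\widehat B\,G$; the observation that $1\notin\sigma(\widehat B)$ because $\norm{W\xi}^2=\langle\xi,(1-G^2)\xi\rangle<1$ strictly is precisely the point that the paper's ``except on a finite-dimensional subspace'' caveat glosses over, and it is what makes the borderline case $\norm{K_\phi|_{\Ran\cL}}=1$ harmless. In short, your version buys a cleaner self-adjointness argument and an actual proof of the compact resolvent difference, at the cost of a longer write-up; the paper's version buys brevity by leaning on Proposition~\ref{prop:Antonov} and leaving the resolvent computation to the reader.
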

 \begin{proof}
The quadratic form 
 $\inprod{\cL g}{K_\phi \cL g}$ is relatively bounded with respect to  $\inprod{\cL g}{\cL g}$ with relative bound $\lambda < 1$, except (possibly, in the case $U_\ext = 0$) for a $3$-dimensional subspace. Therefore, the quadratic form $\inprod{\cL g}{\cL g} - \inprod{\cL g}{K_\phi \cL g}$ is closed on $\Dom (\cL)$, positive semi-definite and $\widetilde A $ can be realized as a Friedrich's extension~\cite[Theorem X.23]{ReedSimonIII}.
 \end{proof}

\section{Essential spectrum}\label{sec:ac_spec}

The essential spectrum of $\tilde \cA$ coincides with that of $-\cL^2$ ( $\cL$ is skew-adjoint). 
In order to study this spectrum, it is convenient to write the Liouville operator in action-angle coordinates associated with the trajectories in the central potential $U:= U_{\rho_0} + U_{\rm ext}$.
As discussed in the introduction, to each pair $(E,L)$ with $L<L_{\max}(E)$ corresponds to periodic trajectories in the plane $(r, v_r)$, between the turning points $r_{\pm}(E,L)$. 
For each fixed $L$, the \emph{radial phase space} $(r,v_r)$ can be parametrized in \textbf{Action-Angle} coordinates, where the action corresponds to the energy and
The angle $\theta_r$ parameterizes this trajectory such that $\dot \theta_r = \omega_r(E,L)$ and we fix it such that $\theta_r(E,L, r_+(E,L)) = 0$. 
Since a change of coordinates can be realized as a unitary transformation,
$$
\sigma (i\cL^{\rm rad} ) = \sigma (i\omega_r(E,L) \partial_{\theta_r})=\{k \omega_r(E,L)| k \in \Z, (E,L) \in \Omega_0\}.
$$
Since $\omega_*>0$, this spectrum has a gap around zero. Away from zero (the contribution of $k=0$), it is a union of bands. If the set of critical points of $\omega_r(E,L)$ in $\Omega_0$ has Lebesgue measure zero, these bands are absolutely continuous.

In dimension $d=3$, the expression for $K_\phi^{\rm rad}$ can be simplified by using Newton's theorem \cite[Sec. 9.7]{liebloss2001}, in the form
$$
U_\rho(\bvec x) =-4 \pi \int_0^{+\infty}  \frac{\rho(r)}{\max(|\bvec x|,r)} r^2 \di r,
$$
valid for radial functions $\rho$.

\begin{proposition} \label{prop:rel-trace}
In $d=3$, $K_\phi^{\rm rad}$ is trace class. Hence $\tilde \cA$ is a relatively trace class perturbation of $-\cL^2$.
\end{proposition}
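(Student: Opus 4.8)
The plan is to reuse the factorization $K_\phi=\mu^{*}R\mu$ from the proof of Proposition~\ref{prop:Antonov}, where $\mu$ is bounded and $R$ is convolution with $1/|\bvec x|$ restricted to $B(0,R_0)$, and to upgrade ``$R$ is Hilbert--Schmidt'' to ``$R^{\rm rad}$ is trace class'' using Newton's theorem. On the radial subspace, Newton's formula shows that $R^{\rm rad}$, acting on $L^{2}_{\rm rad}(B(0,R_0))$, becomes after the unitary change of variables $\rho(r)\mapsto\sqrt{4\pi}\,r\,\rho(r)$ onto $L^{2}((0,R_0),\di r)$ the integral operator with kernel $4\pi\min(r,s)$. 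This kernel is continuous on the compact square $[0,R_0]^{2}$ and positive definite (it is a Gram kernel, $\min(r,s)=\langle\mathbf 1_{[0,r]},\mathbf 1_{[0,s]}\rangle_{L^{2}(0,R_0)}$; alternatively $R$ is positive as convolution with the positive-definite Coulomb kernel). Hence Mercer's theorem applies and gives that $R^{\rm rad}$ is trace class with $\Tr R^{\rm rad}=4\pi\int_{0}^{R_0}r\,\di r=2\pi R_0^{2}$. Since $\mu$ and $\mu^{*}$ are bounded---the bound $\sup_{\bvec x}\int|\phi'(E(\bvec x,\bvec v))|\,\di\bvec v<\infty$ was checked in the proof of Proposition~\ref{prop:Antonov}---the ideal property $\|\mu^{*}R^{\rm rad}\mu\|_{1}\le\|\mu^{*}\|\,\|R^{\rm rad}\|_{1}\,\|\mu\|$ yields that $K_\phi^{\rm rad}$ is trace class.

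For the second assertion, write $\tilde\cA_{\rm rad}=-\cL^{2}_{\rm rad}+\cL K_\phi^{\rm rad}\cL$ and use the second resolvent identity to get
\[
(\tilde\cA_{\rm rad}+1)^{-1}-(-\cL^{2}_{\rm rad}+1)^{-1}
= -\,(\tilde\cA_{\rm rad}+1)^{-1}\,\cL K_\phi^{\rm rad}\cL\,(-\cL^{2}_{\rm rad}+1)^{-1}.
\]
Substituting $K_\phi^{\rm rad}=\mu^{*}R^{\rm rad}\mu$ and regrouping each $\cL$ with the adjacent resolvent, the right-hand side equals, up to sign,
\[
\bigl[(\mu\cL)(\tilde\cA_{\rm rad}+1)^{-1}\bigr]^{*}\; R^{\rm rad}\;\bigl[(\mu\cL)(-\cL^{2}_{\rm rad}+1)^{-1}\bigr].
\]
Now $\cL(-\cL^{2}_{\rm rad}+1)^{-1}$ is bounded (norm $\le\tfrac12$), and Antonov's bound gives $\langle g,\tilde\cA_{\rm rad}g\rangle\ge(1-\lambda)\|\cL g\|^{2}$ with $\lambda<1$, so $\cL(\tilde\cA_{\rm rad}+1)^{-1/2}$, hence $\cL(\tilde\cA_{\rm rad}+1)^{-1}$, is bounded; together with boundedness of $\mu$, both bracketed operators are bounded maps between $L^{2}(\Omega_0)$ and $L^{2}_{\rm rad}(B(0,R_0))$. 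By the ideal property again, the resolvent difference is trace class, which is exactly the statement that $\cL K_\phi^{\rm rad}\cL$ is a relatively trace class perturbation of $-\cL^{2}_{\rm rad}$; the conclusion on the absolutely continuous spectrum in Theorem~\ref{thm:trace_class_intro} then follows from the Kato--Birman invariance theorem for the a.c.\ part.

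The one point needing care---and the main obstacle---is making the formal resolvent manipulations rigorous: one must verify that $(\tilde\cA_{\rm rad}+1)^{-1}$ maps into the form domain of $\cL$ (so that $(\mu\cL)(\tilde\cA_{\rm rad}+1)^{-1}$ is a genuine bounded operator), that the adjoint regrouping $(\tilde\cA_{\rm rad}+1)^{-1}(\mu\cL)^{*}=\bigl[(\mu\cL)(\tilde\cA_{\rm rad}+1)^{-1}\bigr]^{*}$ holds, and similarly for $-\cL^{2}_{\rm rad}$. This is routine given that both operators are Friedrichs extensions of the quadratic forms of Section~\ref{sec:preliminaries} with common form core $C^{\infty}(\Omega_0)$ and that $K_\phi^{\rm rad}$ is bounded; I would carry it out by establishing the identity first on this core and then passing to the closure. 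The trace-class property of $R^{\rm rad}$ itself, by contrast, is elementary once Newton's theorem has reduced the problem to the one-dimensional $\min$-kernel.
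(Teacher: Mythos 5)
Your argument is correct, and it reaches the trace-class property of $K_\phi^{\rm rad}$ by a genuinely different route than the paper. The paper also starts from Newton's theorem, but it represents $K_\phi^{\rm rad}$ directly as an integral operator on the three-dimensional radial phase space $T=\{(r,v,w)\}$ with kernel $\kappa_\phi(r,v,w;r',v',w')=|\phi'(E(r,v))|^{1/2}|\phi'(E(r',v'))|^{1/2}/\max(r,r')$ and evaluates the trace as the integral of this kernel over the diagonal, which after an integration by parts reduces to the finite quantity $8\pi^2\int r\,\phi(E(r,v))\,\di v\,\di r$. You instead localize the trace-class property entirely in the spatial operator $R^{\rm rad}$, identified via the unitary $\rho\mapsto\sqrt{4\pi}\,r\rho$ with the $\min(r,s)$-kernel on $L^2(0,R_0)$, apply Mercer, and transfer the conclusion to $K_\phi^{\rm rad}=\mu^*R^{\rm rad}\mu$ through the trace-ideal property and the boundedness of $\mu$. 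Your version has a technical advantage: when $\phi'$ diverges at $E_0$ (polytropes with $k<1$), the paper's kernel $\kappa_\phi$ is not continuous, so equating the trace of the positive operator with the diagonal integral of its kernel requires an extra justification (e.g.\ writing $K_\phi=(R^{1/2}\mu)^*(R^{1/2}\mu)$ and computing a Hilbert--Schmidt norm); Mercer applied to the continuous $\min$-kernel sidesteps this entirely. The paper's computation, on the other hand, yields a trace bound expressed directly in terms of the steady state, in the same spirit as the Birman--Schwinger bound of Proposition~\ref{prop:bound-trace}. Your second paragraph, establishing trace-class of the resolvent difference via Antonov's bound and the factorization through $R^{\rm rad}$, supplies precisely the step the paper leaves implicit in the word ``Hence,'' and is the correct way to invoke the Kato--Birman theory for these unbounded operators; the domain issues you flag at the end are indeed the only points requiring care, and they are handled as you describe, on the common form core.
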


This proposition implies that the absolutely continuous spectrum of $\cA_{\rm rad}$ is the same as the absolutely continuous spectrum of $-\cL^2$, by the Kato-Rosenblum Theorem \cite{kato1957perturbation, rosenblum1957perturbation} and \cite[Theorem XI.9]{ReedSimonIII}, and the associated subspaces are unitary equivalent. Within the absolutely continuous spectral subspace, the dynamics of the system corresponds to mixing, by the Riemann-Lebesgue Lemma, see e.g. \cite[Section 5]{Yafaev2004trace} for the second-order case.

\begin{proof}
Radially symmetric functions in phase space depend on the three variables $r = |\bvec x|$, $v= |\bvec v|$ and $w:= \bvec x \cdot \bvec v $, having as range the set
$$
T := \{(r,v,w) | r \ge 0, |w| \le r v\}.
$$
Integration over $\bvec v$ reduces to integration over the last two variables with weight $2\pi v/r$, and integration over the total phase space corresponds to integration over $T$ with weight $(2\pi)^2 v r$.
By Newton's theorem, 
we can represent $K_\phi$ as an operator in $L^2(T, 2\pi v r\di r \di v \di w)$ with kernel
$$
\kappa_\phi (r,v, w ; r', v', w') := \frac{|\phi'(E(r,v))|^{1/2}|\phi'(E(r',v'))|^{1/2}}{\max(r, r')}.
$$
Hence,
\begin{align*}
   \Tr (K_\phi) = \int \kappa_\phi 
   &= \int_{T}\frac {|\phi'|(E(r,v))}{r} (2\pi)^2 v r \di r \di v \di w  \\
   &= 8\pi^2 \int_0^\infty \int_0^\infty \left(-\partial_v \phi(E(r,v)) \right) v r \di v \di r,
\end{align*}
where the integration over $w$ contributes $2 r v$.
Integration by parts (without boundary terms since $\phi(E_0)= 0$), gives
\begin{align*}
   \Tr (K_\phi) = 
   & 8\pi^2 \int_0^\infty \int_0^\infty r\phi(E(r,v)) \di v  \di r,
\end{align*}
which is finite as the integral of a continuous function over the bounded domain $\{(r,v)| E(r,v) < E_0 \} $.
\end{proof}



\section{Eigenvalues} \label{sec:EVs}
In this section, we study eigenvalues in the gap in the essential spectrum of $\cA_{\rm rad}$, the interval $(0,\omega_*^2)$. Throughout this section we will drop the subscript $r$ in $\omega_r$. 

Now assume that $\lambda$ is an eigenvalue of $\cA_{\rm rad}$ with associated eigenfunction $g$. Then a short computation shows that $\sqrt{-\cL^2 -  \lambda} \, g$ is an eigenfunction of  
$$
K(\lambda) := \sqrt{-\cL^2 - \lambda}^{-1} \cL_{\rm rad} K_\phi \cL_{\rm rad}  \sqrt{-\cL^2 - \lambda}^{-1},
$$
with eigenvalue $1$. 
We first state a direct consequence of the Birman-Schwinger principle that is the first point of Theorem~\ref{thm:eigenvalues-intro}. 

\begin{proposition}
\label{prop:bound-trace}
If 
$$
\frac{\phi'(E)}{|x|} \frac{1}{\omega(E,L) - \omega_*} \in L^1(\Omega_0)
$$
we have the bound
$$
\norm{K(\lambda)} \le \Tr(K(\lambda)) \le \int_{\Omega_0} \frac{|\phi'(E)|}{|\bvec x|} \frac{\omega^2(E,L)}{\omega^2(E,L) - \omega_*^2} \di \bvec v \di \bvec x
$$
\end{proposition}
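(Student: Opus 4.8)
The plan is to bound the operator norm of $K(\lambda)$ by its trace (legitimate since, as we will check, $K(\lambda)$ is positive and trace class on the gap $(0,\omega_*^2)$), then compute the trace explicitly in action-angle coordinates, and finally bound the $\lambda$-dependent quantity by its value at $\lambda = \omega_*^2$, which is the stated integral. Note $K(\lambda) = B(\lambda)^* B(\lambda)$ with $B(\lambda) = |\phi'|^{1/2} U_{|\phi'|^{1/2}(\cdot)}\,\cL_{\rm rad}\sqrt{-\cL^2-\lambda}^{-1}$ — more precisely, using the factorization $K_\phi = \mu^* R \mu$ from Proposition~\ref{prop:Antonov} and writing $R = S^* S$ with $S$ the "square root" of the Coulomb kernel (convolution against a positive kernel, positive definite on $B(0,R_0)$), we have $K(\lambda) = C(\lambda)^* C(\lambda) \ge 0$ where $C(\lambda) = S\,\mu\,\cL_{\rm rad}\,\sqrt{-\cL^2-\lambda}^{-1}$. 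Hence $\|K(\lambda)\| \le \Tr K(\lambda)$ provided the trace is finite, which is exactly what the computation below establishes.

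For the trace computation I would work in the action-angle representation introduced in Section~\ref{sec:ac_spec}, where $\cL_{\rm rad} = \omega(E,L)\partial_{\theta_r}$ and $-\cL^2$ has purely point-like action $k^2\omega(E,L)^2$ on the Fourier mode $e^{ik\theta_r}$. Decompose $\cL_{\rm rad}\sqrt{-\cL^2-\lambda}^{-1}$ mode by mode: on the $k$-th Fourier sector it acts as multiplication by $ik\omega(E,L)/\sqrt{k^2\omega(E,L)^2-\lambda}$, with modulus squared $k^2\omega^2/(k^2\omega^2-\lambda)$. Using the cyclicity of the trace, $\Tr K(\lambda) = \Tr\bigl(\sqrt{-\cL^2-\lambda}^{-1}\cL_{\rm rad} K_\phi \cL_{\rm rad}\sqrt{-\cL^2-\lambda}^{-1}\bigr) = \Tr\bigl(K_\phi \,\cL_{\rm rad}(-\cL^2-\lambda)^{-1}\cL_{\rm rad}\bigr)$. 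Since $K_\phi = \mu^* R \mu$ and $\mu$ integrates out the angle variable with weight $|\phi'|^{1/2}$, only the action of $\cL_{\rm rad}(-\cL^2-\lambda)^{-1}\cL_{\rm rad}$ summed over all Fourier modes matters, i.e. the symbol $\sum_{k\in\Z} k^2\omega^2/(k^2\omega^2-\lambda)$ — but here one must be careful: the operator $K_\phi$ acting after $\mu$ only sees the $k=0$ mode of its argument, and $\mu$ of an odd function kills $k=0$... Actually the cleaner route is to compute directly: writing out $\Tr K(\lambda)$ as an integral of the diagonal kernel, the Coulomb factor $1/\max(r,r')$ evaluated on the diagonal gives $1/r$ (exactly as in the proof of Proposition~\ref{prop:rel-trace}), the weight $|\phi'(E)|$ appears from $\mu^*\mu$, and the spectral multiplier contributes the factor $\omega^2(E,L)/(\omega^2(E,L)-\lambda)$ after summing the relevant Fourier contribution. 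Reorganizing this as an integral over $\Omega_0$ in the original $(\bvec x,\bvec v)$ variables (using $d\bvec v$ on each fiber over $\bvec x \in B(0,R_0)$), we obtain
$$
\Tr K(\lambda) = \int_{\Omega_0}\frac{|\phi'(E(\bvec x,\bvec v))|}{|\bvec x|}\,\frac{\omega^2(E,L)}{\omega^2(E,L)-\lambda}\,\di\bvec v\,\di\bvec x.
$$

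The final step is monotonicity in $\lambda$: for $\lambda \in (0,\omega_*^2)$ and $\omega(E,L)\ge \omega_* > 0$ on $\Omega_0$, the map $\lambda \mapsto \omega^2/(\omega^2-\lambda)$ is increasing, so $\Tr K(\lambda) \le \Tr K(\omega_*^2) = \int_{\Omega_0}|\phi'(E)|\,|\bvec x|^{-1}\,\omega^2/(\omega^2-\omega_*^2)$, which is finite precisely by the hypothesis $\phi'(E)|x|^{-1}(\omega(E,L)-\omega_*)^{-1}\in L^1(\Omega_0)$ (the remaining factor $\omega^2/(\omega+\omega_*)$ is bounded above on $\Omega_0$). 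The main obstacle I anticipate is the bookkeeping in the trace computation — justifying the mode-by-mode manipulation of $\sqrt{-\cL^2-\lambda}^{-1}$, correctly tracking the Fourier-sector weights through the composition with $K_\phi$ (whose kernel $\kappa_\phi$ depends on $E$ alone and so only couples the angle-averaged parts), and handling the boundary behaviour as $\omega \to \omega_*$ and as $E \to E_0$ where $\phi'$ may diverge, so that Fubini and the change of variables to $(r,E,L)$ are legitimate. Everything else is a routine combination of the Birman-Schwinger principle and the trace computation already carried out for Proposition~\ref{prop:rel-trace}.
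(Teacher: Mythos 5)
Your overall architecture matches the paper's: positivity of $K(\lambda)$ via the factorization $K_\phi=\mu^*R\mu$ and the Coulomb inner product, hence $\norm{K(\lambda)}\le\Tr K(\lambda)$, followed by the Newton's-theorem trace computation with $1/\max(r,r')$ restricted to the diagonal giving $1/r$, and finally $\lambda\le\omega_*^2$. The final bound and the finiteness argument are also correct.

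The gap is in the middle step, where you assert the \emph{identity} $\Tr K(\lambda)=\int_{\Omega_0}|\phi'|\,|\bvec x|^{-1}\,\omega^2/(\omega^2-\lambda)$. This identity is not true, and your two attempted justifications both fail. First, the "full symbol" $\sum_{k\in\Z}k^2\omega^2/(k^2\omega^2-\lambda)$ diverges (each term tends to $1$), so no literal summation over Fourier modes can produce the factor $\omega^2/(\omega^2-\lambda)$. Second, the claim that $K_\phi$ (or $\kappa_\phi$) "only couples the angle-averaged parts" is false: the kernel $1/\max(r,r')$ depends on $r$, and $r$ is not constant along an orbit, so $\kappa_\phi$ is genuinely angle-dependent in the $(E,L,\theta_r)$ coordinates and mixes all Fourier sectors. (Also, $\mu$ integrates over $\bvec v$ at fixed $\bvec x$, which is not the projection onto the $k=0$ angle mode; the angle average runs along the trajectory, varying $r$.) The correct step — and the paper's one-line key observation — is an operator \emph{inequality}, not an equality: the multiplier $k\mapsto k^2\omega^2/(k^2\omega^2-\lambda)$ is maximized over $k\neq 0$ at $k=\pm1$, so $\cL(-\cL^2-\lambda)^{-1}\cL\le \omega^2(E,L)/(\omega^2(E,L)-\lambda)$ as operators. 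Since $R$ is a positive operator, this inequality passes through the trace, $\Tr\bigl(R^{1/2}\mu\,M\,\mu^*R^{1/2}\bigr)\le\Tr\bigl(R^{1/2}\mu\,\tfrac{\omega^2}{\omega^2-\lambda}\,\mu^*R^{1/2}\bigr)$, and the right-hand side is what the Newton's-theorem computation of Proposition~\ref{prop:rel-trace} evaluates, yielding the stated bound after replacing $\lambda$ by $\omega_*^2$. With that substitution your proof closes; as written, the asserted trace identity would not survive scrutiny.
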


\begin{proof}
 We write $\inprod{g}{K(\lambda) g} = D(\mu_\lambda (g), \mu_\lambda (g))$, where 
 
\begin{equation} \label{eq:def_mu_lambda}
 \mu_\lambda = \mu \cL(-\cL^2 - \lambda)^{-1/2}   
\end{equation}

with $\mu$ defined in \eqref{eq:def_mu}.  
For $k \in \Z $, the function  $k \mapsto |k \omega|/\sqrt{k^2\omega^2 - \lambda}$ attains its maximum  at $k= \pm 1$, hence
$$
|\cL|(-\cL^2 - \lambda)^{-1/2} \le \omega(E,L) (\omega(E,L)^2 - \lambda)^{1/2}.
$$
By using Newton's theorem as in the proof of Theorem~\ref{prop:rel-trace}, we obtain the result.
\end{proof}
\begin{figure}
    \centering
   \includegraphics[width= 0.48\textwidth]{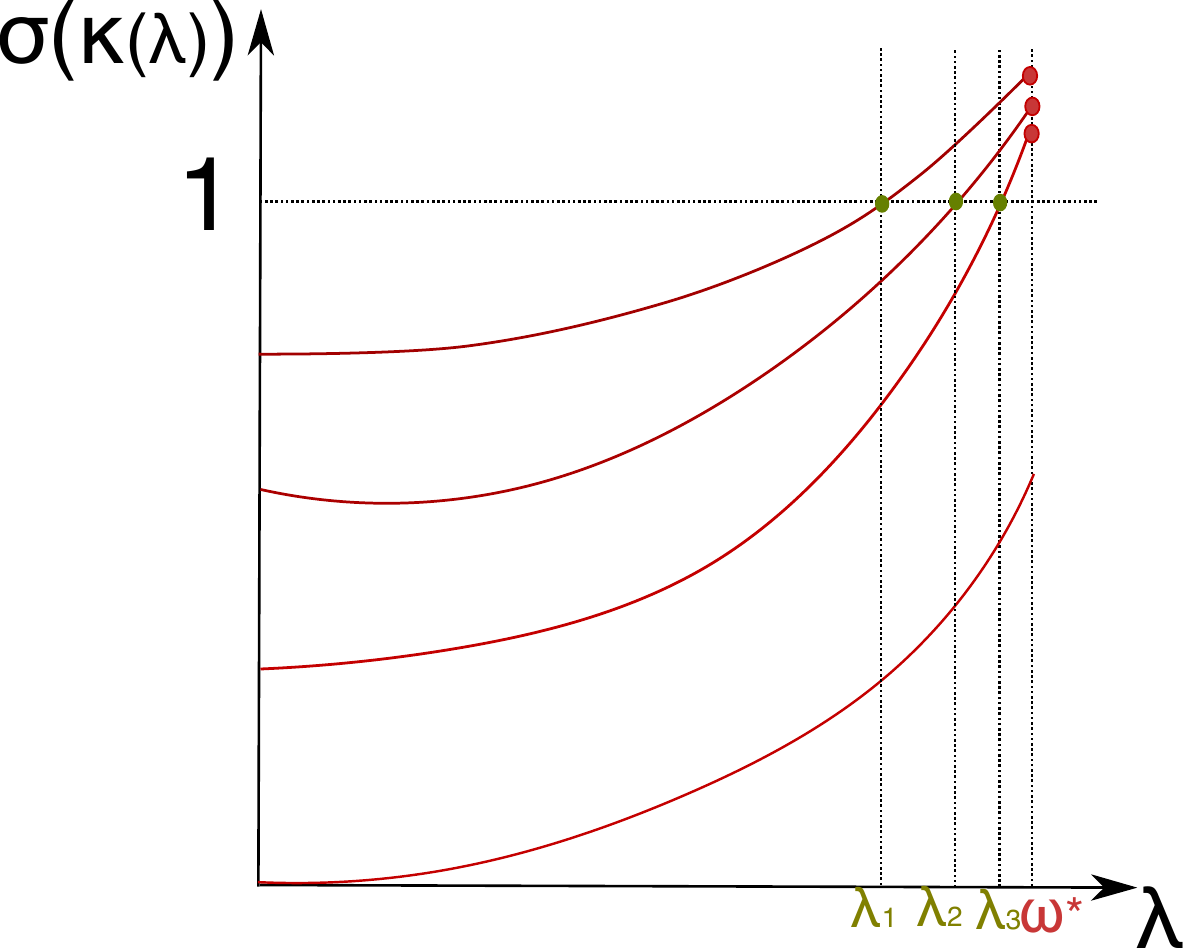}
    \includegraphics[width= 0.48\textwidth]{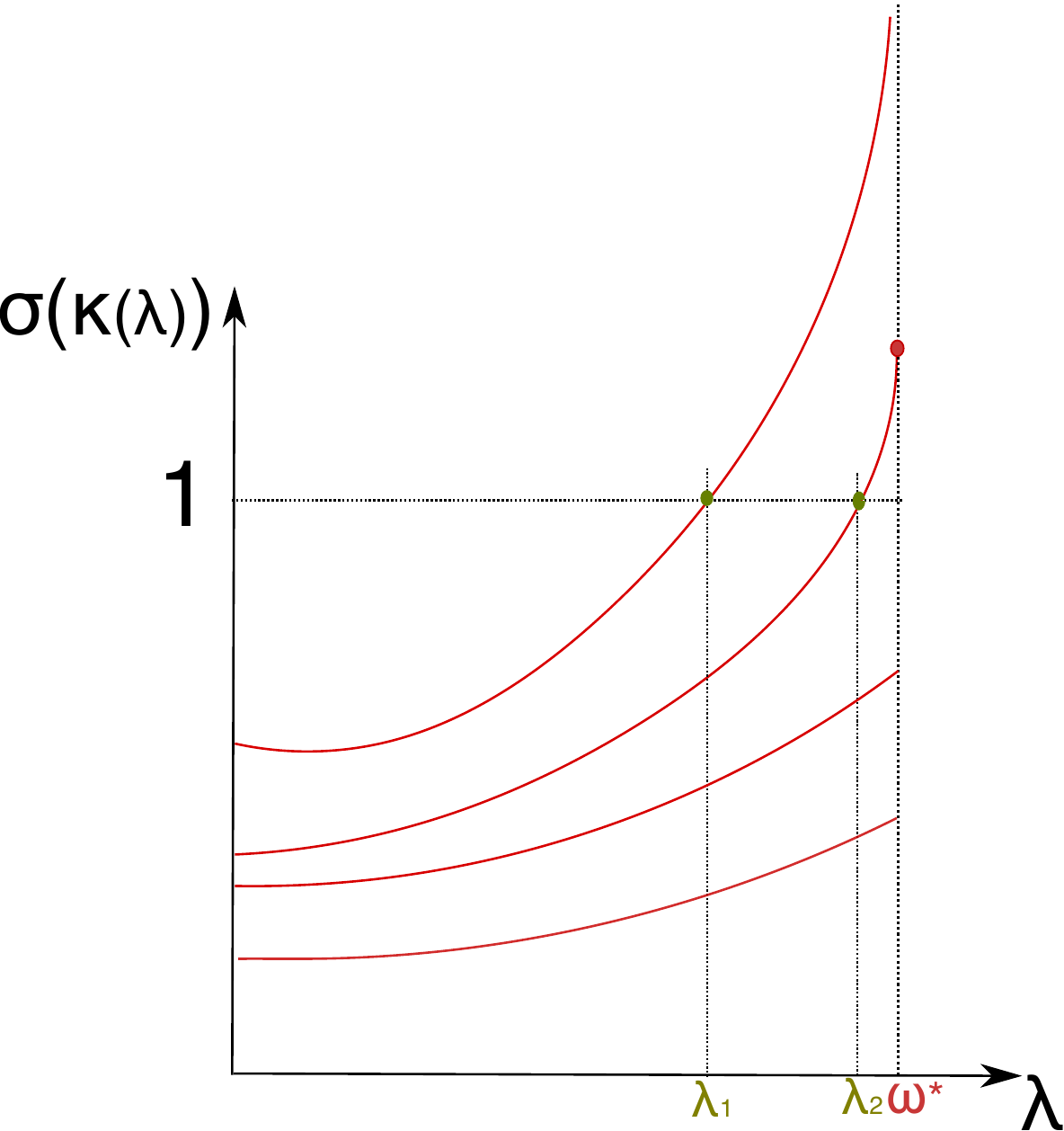}
    \caption{Sketch of possible scenarios for the eigenvalues of the Birman-Schwinger operator. Left : if $\rho_*(x)/|x|$ is integrable, all eigenvalues are bounded. Right : if $\rho_*$ is not integrable, the first eigenvalue diverges as $\lambda$ approaches $\omega_*$.  }
    \label{fig:Birman_Schwinger}
\end{figure}
\begin{proof}[Proof of Theorem~\ref{thm:eigenvalues-intro}.\ref{it:intro-bound}]
The eigenvalues of $K (\lambda)$ are continuous and increasing in $\lambda \in [0, \omega_*)$, and coincide with those of $K_\phi$ at $\lambda = 0$. A qualitative picture is given in Figure~\ref{fig:Birman_Schwinger}. For a given value of $\lambda$, the number of eigenvalues of $\tilde \cA_{\rm rad}$ below $\lambda$ equals the number of eigenvalues of $K(\lambda)$ above $1$. By Proposition~\ref{prop:bound-trace}, the trace of $K(\lambda)$ gives an upper bound on this number given by $\norm{|\bvec x|^{-1} \rho_*(\bvec x) }_{L^1}$. Therefore, if this quantity is less than $m\in\N$, then there must be at most $m-1$ eigenvalues of $\mathcal{A}_{\text{rad}}$ strictly below $\omega_{\ast}$.
\end{proof}

Recall that $\Omega_0$ is the support of the steady state in phase space. In order to state the next proposition, we also need to define
\begin{equation} \label{eq:def_Omega_eps}
    \Omega_\epsilon = \Omega_0 \cap \{ (\bvec x, \bvec v) | r_+(E,L) > r_-(E,L) + \epsilon \}.
\end{equation}
Hence, $\Omega_\epsilon$ is a subset of the support bounded away from the circular orbits.

\begin{proposition} \label{prop:divergence}
If $\phi'(E)/(\omega(E,L)- \omega_*)$ is not integrable in $\Omega_\epsilon$ with some $\epsilon > 0$, then $\norm{K(\lambda)|_{\rm odd}}$ diverges as $\lambda \nearrow \omega^2_*$. 
\end{proposition}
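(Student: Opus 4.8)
The plan is to establish a quantitative lower bound on $\norm{K(\lambda)|_{\rm odd}}$ by testing the quadratic form $\inprod{g}{K(\lambda)g} = \cD(\mu_\lambda(g), \mu_\lambda(g))$ against a cleverly chosen family of trial functions concentrated, in action-angle variables, on the first radial harmonic $k = \pm 1$. Recall from the proof of Proposition~\ref{prop:bound-trace} that $\mu_\lambda = \mu \cL(-\cL^2-\lambda)^{-1/2}$, and that on the $k$-th Fourier mode in $\theta_r$ the operator $|\cL|(-\cL^2-\lambda)^{-1/2}$ acts as multiplication by $|k\omega|/\sqrt{k^2\omega^2-\lambda}$. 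For $k = \pm 1$ this factor is $\omega/\sqrt{\omega^2-\lambda}$, which blows up precisely on the set where $\omega(E,L)\to\omega_*$ as $\lambda\nearrow\omega_*^2$. So the strategy is: pick $g = g_\lambda$ supported on $\Omega_\epsilon$, odd in $v$ (i.e. built from $\cos\theta_r$ or $\sin\theta_r$, which are odd under $v_r \mapsto -v_r$), adapted to the region where $\omega$ is near $\omega_*$, and show the resulting Coulomb energy $\cD(\mu_\lambda(g_\lambda), \mu_\lambda(g_\lambda))$ diverges while $\norm{g_\lambda}$ stays controlled.

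First I would reduce the Coulomb energy to something manageable: by positivity of the Coulomb kernel and Newton's theorem one has the lower bound $\cD(\rho,\rho) \ge c \big(\int_{B(0,R_0)} |\bvec x|^{-1}\,\rho(\bvec x)\,\di\bvec x\big)^2 / (\text{something}) $ — more precisely, since $\cD(\rho,\rho) = \tfrac{1}{4\pi}\norm{\nabla U_\rho}^2$ and for a nonnegative $\rho$ supported in $B(0,R_0)$ one controls the Coulomb self-energy from below by $\cD(\rho,\rho)\ge R_0^{-1}\big(\int\rho\big)^2$ is too crude; instead I would use $\cD(\rho,\rho)\ge \int |\bvec x|^{-1}\rho\,\chi$ for an appropriate comparison, or simply work with the rank-one lower bound $\cD(\rho,\rho)\ge |\cD(\rho,\psi)|^2/\cD(\psi,\psi)$ for a fixed test density $\psi$. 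Taking $\psi$ to be (a smooth approximation of) $|\bvec x|$-weighted characteristic function of $B(0,R_0)$ gives $\cD(\rho,\rho)\gtrsim \big(\int |\bvec x|^{-1}\rho\big)^2$. Then with $\rho = \mu_\lambda(g_\lambda)$, choosing $g_\lambda \ge 0$ on the $k=1$ mode so that no cancellation occurs in the $\bvec v$-integral defining $\mu$, one gets $\cD(\mu_\lambda(g_\lambda),\mu_\lambda(g_\lambda)) \gtrsim \big(\int_{\Omega_\epsilon} |\bvec x|^{-1} |\phi'(E)|^{1/2}\,\tfrac{\omega}{\sqrt{\omega^2-\lambda}}\,|g_\lambda|\big)^2$.

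The next step is the optimization: balance this against $\norm{g_\lambda}^2 = \int_{\Omega_\epsilon}|g_\lambda|^2$. By Cauchy--Schwarz the optimal choice is $g_\lambda(\bvec x,\bvec v) \propto |\bvec x|^{-1} |\phi'(E)|^{1/2}\,\omega(\omega^2-\lambda)^{-1/2}$ (restricted to $\Omega_\epsilon$, and dressed with the $\cos\theta_r$ factor to land in the odd subspace), which makes the Rayleigh quotient equal to
\[
\frac{\cD(\mu_\lambda(g_\lambda),\mu_\lambda(g_\lambda))}{\norm{g_\lambda}^2} \gtrsim \int_{\Omega_\epsilon} \frac{|\phi'(E)|}{|\bvec x|^2}\,\frac{\omega^2(E,L)}{\omega^2(E,L)-\lambda}\,\di\bvec x\,\di\bvec v .
\]
As $\lambda\nearrow\omega_*^2$, the integrand increases monotonically to $|\phi'(E)|\,|\bvec x|^{-2}\,\omega^2/(\omega^2-\omega_*^2)$, and by monotone convergence the limit is $+\infty$ exactly under the stated non-integrability hypothesis (the extra $|\bvec x|^{-2}$ versus $|\bvec x|^{-1}$ and the factor $\omega^2/(\omega+\omega_*)$ versus $1$ are harmless since $\bvec x$ ranges over a bounded region bounded away from $0$ on $\Omega_\epsilon$ — wait, $\bvec x$ can approach $0$; but on the $L=0$ part one still has $|\bvec x|^{-2}$ locally integrable in $\R^3$, and comparing $|\bvec x|^{-2}$ with $|\bvec x|^{-1}$ only strengthens divergence, so non-integrability of $\phi'/(\omega-\omega_*)$ implies non-integrability of the heavier-weighted integrand). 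Hence $\norm{K(\lambda)|_{\rm odd}}\to\infty$.

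The main obstacle I expect is twofold. First, the rigorous justification that the trial function can be taken supported on a single Fourier mode $k=\pm 1$ in $\theta_r$ while remaining in the domain of $\cL(-\cL^2-\lambda)^{-1/2}$ and in the odd subspace — this requires care with the action-angle change of variables near the circular orbits, but is sidestepped by working on $\Omega_\epsilon$, which is bounded away from them and where $\omega_r$ is a nice function of $(E,L)$. Second, and more delicate, is handling the possible vanishing or degeneracy of $\phi'$ and the geometry of the level set $\{\omega = \omega_*\}$ inside $\Omega_\epsilon$: one must ensure the chosen $g_\lambda$ genuinely has finite norm for each $\lambda < \omega_*^2$ (so that it is a legitimate test function) while the Rayleigh quotient still diverges in the limit — this is exactly where the hypothesis ``not integrable in $\Omega_\epsilon$'' does the work, via monotone convergence, and the truncation $\Omega_\epsilon$ (rather than $\Omega_0$) is what guarantees $\omega^2-\lambda$ stays bounded below by a positive constant on the complement of a neighborhood of $\{\omega=\omega_*\}$, making $g_\lambda \in L^2$ for each fixed $\lambda$.
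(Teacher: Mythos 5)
Your overall strategy is the same as the paper's: test the Birman--Schwinger form on the first odd Fourier mode in $\theta_r$ with a weight proportional to $\omega\sqrt{|\phi'|}/\sqrt{\omega^2-\lambda}$, and let the non-integrability hypothesis drive the divergence via monotone convergence. But there is a genuine gap at the step where you claim that ``choosing $g_\lambda\ge 0$ on the $k=1$ mode'' ensures ``no cancellation occurs in the $\bvec v$-integral defining $\mu$.'' The density fed into the Coulomb form is not $\int |\phi'|^{1/2}|g_\lambda|\,\di\bvec v$: applying $\cL(-\cL^2-\lambda)^{-1/2}$ to the odd mode $g_\lambda\sin\theta_r$ produces $\tfrac{\omega}{\sqrt{\omega^2-\lambda}}\,g_\lambda\cos\theta_r$, and the factor $\cos\theta_r$ is forced on you --- you cannot remove it and stay in the odd subspace. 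At a fixed spatial radius $r$, the orbits $(E,L)$ passing through $r$ do so at different phases $\theta_r(E,L,r)$, ranging from near $0$ (if $r$ is close to $r_+(E,L)$) to near $\pi$ (if $r$ is close to $r_-(E,L)$), so $\cos\theta_r$ takes both signs and the velocity integral $\int g_\lambda^2\cos\theta_r\,\di\bvec v$ can suffer arbitrary cancellation. Your subsequent chain of inequalities (the rank-one Coulomb lower bound followed by Cauchy--Schwarz optimization) silently replaces $\cos\theta_r$ by something of a fixed sign, which is not legitimate; without controlling this, the resulting density could be small or vanish even though $\norm{g_\lambda}^2$ diverges.

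This is exactly the difficulty the paper's Lemma~\ref{lemma_technical} is designed to overcome: one localizes the $(E,L)$-support of the test function to a small neighborhood $\hat\Lambda$ on which the orbits are geometrically coherent, so that there exist radii $r_1<r_2$ with $\cos\theta_r>0$ for all $r>r_1$ and $\cos\theta_r\ge 1/2$ for $r\ge r_2$, uniformly over $\hat\Lambda$. Pairing the (normalized) density against a cutoff supported in $\{r>r_1\}$ then yields a sign-definite integral bounded below by a universal constant, which is what replaces your ``no cancellation'' assertion. Note also that this localization is where the restriction to $\Omega_\epsilon$ (orbits bounded away from circular) is actually used --- not merely, as you suggest, to keep action-angle coordinates nondegenerate, but to guarantee that the sublevel sets of $\cos\theta_r$ can be pinned down uniformly. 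A secondary, fixable issue: your bound $\cD(\rho,\rho)\gtrsim\bigl(\int|\bvec x|^{-1}\rho\bigr)^2$ does not follow from the rank-one inequality, since a comparison density $\psi$ with $U_\psi\gtrsim|\bvec x|^{-1}$ has infinite self-energy; you only get $\bigl(\int\rho\bigr)^2$ with a bounded $U_\psi$, which is still sufficient here.
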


The relevance of the restriction to $\Omega_\epsilon$, is that it allows to construct test functions, localized near a point where $\omega$ approaches $\omega_*$, which give rise to a nonzero density. Since we will need subsets of phase space determined by fixed values of $(E,L)$, we introduce the following notation:
$$
\hat{A} := \{(E(\bvec x,\bvec v), L(\bvec x, \bvec v)) | (\bvec x, \bvec v) \in A\}
$$

\begin{lemma} \label{lemma_technical}
Under the Hypotheses of Proposition~\ref{prop:divergence}, there exists $\hat\Lambda\subset \hat \Omega_\epsilon$ such that  $\Lambda := (E,L)^{-1}(\hat\Lambda)$ has the following properties.
\begin{enumerate}
    \item \label{it:divergence_point}$\phi'(E)/(\omega(E,L)- \omega_*) |_{\Lambda}$ is not integrable
    \item  \label{it:suff small} There exists $r_1$ such that  for all $(E,L) \in \hat \Lambda$ and for all $r \in (r_1, r_+(E,L))$ we have $\cos(\theta_r(E,L, r)) >0$  
    \item \label{it:bounded orbits}There exists  $r_2 > r_1$ such that for all $(E,L) \in \Lambda$, and all $\theta_r \in (0,\pi/3)$, we have $r(E,L, \theta_r) \ge r_2$
\end{enumerate}
\end{lemma}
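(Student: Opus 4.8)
The plan is to construct the set $\hat\Lambda$ explicitly by localizing near a point where the function $\phi'(E)/(\omega(E,L)-\omega_*)$ concentrates its non-integrability, and then to shrink it so that the two geometric conditions \ref{it:suff small} and \ref{it:bounded orbits} hold on the shrunken set. First I would use the hypothesis of Proposition~\ref{prop:divergence}: since $\phi'(E)/(\omega(E,L)-\omega_*)$ is not integrable on $\Omega_\epsilon$, and $\Omega_\epsilon$ is a bounded set, the non-integrability must come from the set where $\omega$ is close to $\omega_*$, i.e.\ from arbitrarily small neighborhoods (in the $(E,L)$-plane) of the set $\{\omega = \omega_*\}\cap \hat\Omega_\epsilon$. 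More precisely, for every neighborhood $N$ of $\{\omega=\omega_*\}$ in the $(E,L)$-plane, the integral of $|\phi'(E)|/(\omega(E,L)-\omega_*)$ over $(E,L)^{-1}(N)\cap\Omega_\epsilon$ is still infinite (the complement has a bounded integrand on a bounded set). Since $\{\omega=\omega_*\}\cap\hat\Omega_\epsilon$ is compact, a standard covering argument produces a point $(E^\sharp,L^\sharp)$ with $\omega(E^\sharp,L^\sharp)=\omega_*$ such that every neighborhood of $(E^\sharp,L^\sharp)$ has infinite integral; because $\Omega_*$ is not attained at circular orbits (this is the standing assumption of Theorem~\ref{thm:eigenvalues-intro}.\ref{it:intro-divergence}, inherited here through $\Omega_\epsilon$) we may take $(E^\sharp,L^\sharp)$ to correspond to genuinely oscillating orbits, i.e.\ $r_+(E^\sharp,L^\sharp) > r_-(E^\sharp,L^\sharp)$.

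Next I would establish the geometric facts. The angle $\theta_r(E,L,r)$ is defined so that $\theta_r = 0$ at $r=r_+$, and it is a continuous (indeed smooth, where the orbit is non-degenerate) function that increases from $0$ as $r$ decreases from $r_+$. Hence at the base point there is some $r_1^\sharp \in (r_-(E^\sharp,L^\sharp), r_+(E^\sharp,L^\sharp))$ with $\theta_r(E^\sharp,L^\sharp,r) \in (0,\pi/2)$ for all $r\in(r_1^\sharp, r_+(E^\sharp,L^\sharp))$, so that $\cos\theta_r > 0$ there. Similarly, $r(E^\sharp,L^\sharp,\theta_r)$ is a continuous function of $\theta_r$ that is monotone decreasing from $r_+$ near $\theta_r = 0$, so for $\theta_r\in(0,\pi/3)$ we have $r(E^\sharp,L^\sharp,\theta_r) \ge r_2^\sharp$ for some $r_2^\sharp > r_1^\sharp$, provided $r_2^\sharp$ is chosen below $r(E^\sharp, L^\sharp, \pi/3)$. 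Both of these are \emph{open} conditions: by continuity of $r_\pm$, of $\theta_r(E,L,\cdot)$, and of $r(E,L,\cdot)$ in $(E,L)$ (which follows from the smooth dependence of the classical flow in the non-degenerate regime, as recalled in the introduction), there is a neighborhood $\hat\Lambda$ of $(E^\sharp,L^\sharp)$ in $\hat\Omega_\epsilon$ and fixed values $r_1 < r_2$ (slightly perturbed from $r_1^\sharp, r_2^\sharp$) such that \ref{it:suff small} and \ref{it:bounded orbits} hold for all $(E,L)\in\hat\Lambda$. Shrinking $\hat\Lambda$ preserves property \ref{it:divergence_point} by the choice of $(E^\sharp,L^\sharp)$ as a point of infinite local integral. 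Setting $\Lambda = (E,L)^{-1}(\hat\Lambda)$ completes the construction.

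The main obstacle I anticipate is the rigorous justification of the continuity statements: one needs that $r_\pm(E,L)$, the angle function $\theta_r(E,L,r)$, and its inverse $r(E,L,\theta_r)$ depend continuously on $(E,L)$ uniformly in the relevant range, and that this continuity is good enough to transfer the strict inequalities from the single point $(E^\sharp,L^\sharp)$ to a neighborhood with \emph{fixed} thresholds $r_1, r_2$. Away from circular orbits this is standard ODE regularity — $V_{\mathrm{eff}}'(r_\pm)\neq 0$ so the turning points depend smoothly on $(E,L)$ by the implicit function theorem, and the action-angle variables inherit this regularity — but one has to be a little careful near $r=r_+$ where $\theta_r$ has a square-root-type behavior in $r-r_+$; the cleanest route is to work with $r$ as a function of $\theta_r$ (which is smooth in $\theta_r$ including at $\theta_r=0$, since $\dot r$ vanishes there but $\ddot r = -V_{\mathrm{eff}}'(r_+) \neq 0$) rather than the other way around, and to phrase condition \ref{it:suff small} via this inverse as well. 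The second potential subtlety is making sure the localization in \ref{it:divergence_point} is compatible with remaining inside $\Omega_\epsilon$, but this is automatic since we started from the hypothesis that the integral diverges already over $\Omega_\epsilon$, and $\hat\Lambda$ is taken as a subset of $\hat\Omega_\epsilon$ from the outset.
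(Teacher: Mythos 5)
Your proposal is correct and rests on the same two pillars as the paper's proof: a compactness argument in the $(E,L)$-plane to localize the divergence to a small neighbourhood, and continuity of the radii at which $\theta_r$ equals $\pi/2$ and $\pi/3$ (your $r_1^\sharp,r_2^\sharp$ are exactly the paper's $\mathsf{r}_1(E,L),\mathsf{r}_2(E,L)$) to transfer the geometric conditions to that neighbourhood with uniform thresholds $r_1<r_2$. The organization differs slightly: the paper never identifies \emph{where} the divergence concentrates — it covers all of $\adh\hat\Omega_\epsilon$ by neighbourhoods on which the geometric conditions hold with uniform constants, extracts a finite subcover, and pigeonholes the infinite integral into one of the finitely many pieces. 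You instead first pin the divergence to a point of $\{\omega=\omega_*\}\cap\adh\hat\Omega_\epsilon$ and then build the good neighbourhood around it. Your route works but needs an extra ingredient you glossed over: away from $\{\omega=\omega_*\}$ the integrand is \emph{not} bounded (for polytropes with $k<1$, $\phi'$ blows up at $E=E_0$); what saves the argument is that $\phi'\circ E$ is integrable by Assumption~\ref{hypothesis}.\ref{hyp:on_phi'} while $(\omega-\omega_*)^{-1}$ is bounded off a neighbourhood of the level set, so the integral over the complement is finite. The paper's version avoids having to know this. Finally, your concentration point may lie on the boundary piece $L=0$, where the orbit degenerates to a line through the origin and the implicit-function-theorem argument for the turning points does not apply directly; the paper handles this by noting that $\theta_r(E,L,r)$ and hence $\mathsf{r}_1,\mathsf{r}_2$ extend continuously as $L\to0$ (and likewise to $E=E_0$), a step you should add since your neighbourhood must be open in $\hat\Omega_\epsilon$, whose closure contains these boundary pieces.
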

This situation is sketched in Figure~\ref{fig:onion-orbits}.
We first prove the Proposition and then the Lemma.

\begin{figure}
    \centering \includegraphics[]{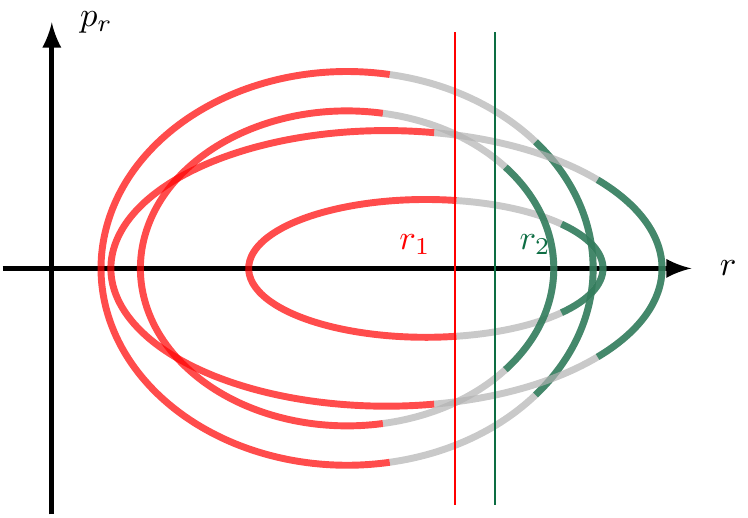}
    \caption{Sketch of some orbits in phase space for nearby values of $E$ and $L$. The red part of each orbit ($\cos(\theta_r)<0$) lies to the right of $r_1$ and the green part ($\cos(\theta_r)> 1/2$) lies to the left of $r_2$. }
    \label{fig:onion-orbits}
\end{figure}

\begin{proof}[Proof of Proposition~\ref{prop:divergence}]
\noindent \textbf{Coordinates for the odd subspace}
Rotationally symmetric functions $f$ can be written as functions of $r:=|\bvec x|$, $v:= |\bvec v|$ and $w:= \bvec x \cdot \bvec v$. Such an $f$ is odd in $\bvec v$, if and only if it is odd in $w$. In order to study the norm of $K (\lambda)$, it is more convenient to use the action variables $E,L$, and associated angle $\theta_r$, defined such that $\theta_r=0$ corresponds to the outer turning point $r_+$. In these coordinates functions $\tilde f(E,L, \theta_r)$ that are odd in $\theta_r$ correspond to odd functions in $\bvec v$. This is because a change of sign in $\theta_r$ corresponds to a change of sign in the radial component of $w$. Hence, a function in the odd subspace can be written as a function of the variables $(E,L, \theta_r)$ that is odd in the last variable. Such a function can be decomposed in turn in a Fourier series
$$
g = \tilde g(E,L, \theta_r) = \sum_{k=1}^{\infty} \tilde g_k(E,L) \sin(k \theta_r).
$$
The advantage of this form is that 
$$
\cL g = \sum_{k=1}^{\infty} \tilde g_k(E,L) k \omega(E,L) \cos(k \theta_r).
$$

\noindent\textbf{Definition of the test function.} 
In order to show divergence of the operator norm, it is sufficient to construct, for each $\lambda$, a test function $h_\lambda$ in the odd subspace, such that
\begin{equation}
    \label{eq:Rayleigh-diverges}
    \lim_{\lambda \nearrow \omega_*^2}\frac{\langle h_\lambda, K(\lambda) h_\lambda \rangle}{\norm{h_\lambda}^2} = +\infty.
\end{equation}
We construct a function by defining
\begin{equation}
    g_\lambda := \tilde g_\lambda (E(\bvec{x}, \bvec{v}) , L(\bvec{x}, \bvec{v})), \quad \tilde g_\lambda:= \left.\frac {\omega(E,L)\sqrt{|\phi'(E)|}}{\sqrt{\omega(E,L)^2 - \lambda}}\right|_{\hat\Lambda},
\end{equation}
where $\hat\Lambda$ is the set of values of $(E,L)$ defined in Lemma~\ref{lemma_technical}.
Then, we take as a test function 
\begin{equation}
    h_\lambda(\bvec x, \bvec v) =  g_\lambda (\bvec{x}, \bvec{v}) \sin(\theta_r(\bvec{x}, \bvec{v})).
\end{equation}
$h_\lambda$ is odd, bounded, and we have that 
\begin{equation}
    \frac{\cL}{\sqrt{-\cL^2 - \lambda}}h_\lambda =  \frac{\omega(E,L)}{\sqrt{\omega^2(E,L) - \lambda}}g_\lambda\cos(\theta_r) 
    = \frac{g_\lambda^2}{\sqrt{|\phi'(E)|}}  \cos(\theta_r).
\end{equation}
Note that by hypothesis, $ \norm{g_\lambda}_{L^2}^2$ diverges as $\lambda \to \omega_*^2$. 
Now, we consider $\mu_\lambda$ defined in \eqref{eq:def_mu_lambda} and obtain
$$
 \mu_\lambda(h_\lambda) (r) := \int g_\lambda^2 \cos(\theta_r) \di \bvec v.
$$
This computation motivates the definition of
the functions
$$
\gamma_\lambda (r) := \mu_\lambda(h_\lambda)(r) / \norm{g_\lambda}_{L^2}^2,
$$
such that, since $ \norm{g_\lambda}_{L^2}^2 \ge  \norm{h_\lambda}_{L^2}^2$, 
\begin{equation}
    \label{eq:Rayleigh-quotient}\frac{\langle  h_\lambda, K(\lambda) h_\lambda \rangle}{ \norm{h_\lambda}_{L^2}^2} \ge \frac{\langle  h_\lambda, K(\lambda) h_\lambda \rangle}{ \norm{g_\lambda}_{L^2}^2} = \frac{\cD(\mu_\lambda(h_\lambda), \mu_\lambda(h_\lambda))}{\norm{g_\lambda}_{L^2}^2} = \norm{g_\lambda}_{L^2}^2 \cD(\gamma_\lambda, \gamma_\lambda).
\end{equation}
If $\cD(\gamma_\lambda, \gamma_\lambda)$ is unbounded along some sequence of $\lambda$'s, we have proven \eqref{eq:Rayleigh-diverges}.

Hence, we assume that $\cD(\gamma_\lambda, \gamma_\lambda)$ is bounded. We will prove that in this case, $\cD(\gamma_\lambda, \gamma_\lambda)$ is bounded uniformly away from zero, which implies the divergence of $\norm{K(\lambda)}$ in view of \eqref{eq:Rayleigh-quotient}.
Since $\cD(\cdot , \cdot )$ extends to a scalar product in $H^{-1}(B(0,R_0))$, there is a sequence $\lambda_n$ such that $\gamma_{\lambda_n}$ converges weakly to some  $\gamma_0 \in H^{-1}$.
Since norms are weakly lower semi-continuous,
$$
\cD(\gamma_0, \gamma_0) \le \liminf_n \cD(\gamma_{\lambda_n}, \gamma_{\lambda_n}).
$$
We will show below that $\gamma_0 \neq 0$. Assuming this, for the moment, we conclude that
$$
\liminf_{n \to \infty} \frac{\langle  h_{\lambda_n}, K({\lambda_n}) h_{\lambda_n} \rangle}{ \norm{h_{\lambda_n}}_{L^2}^2} \ge \cD(\gamma_0, \gamma_0)  \liminf_{n \to \infty} \norm{g_{\lambda_n}}_{L^2}^2 = +\infty.
$$

\noindent\textbf{Proof that $\gamma_0 \neq 0$.} 
To this end, take a non-negative function $\chi$ in $C^{\infty}$, such that $\chi(r)= 0$ if  $r < r_1$ and $\chi(r)=1$ if $r \in (r_2,R_0)$. 
We use the positivity of $\cos(\theta_r)$ in the support of $\chi$,and then $\cos(\theta_r)\ge 1/2$ for $r\ge r_2$, to obtain
\begin{align*}
    \langle \gamma_\lambda, \chi \rangle 
    &= 4\pi
    \norm{g_\lambda}_{L^2}^{-2 }\int_{\R_+} \chi(r) \mu_\lambda (g_\lambda)(r) r^2 \di r \\
    &\ge  4\pi \norm{g_\lambda}_{L^2}^{-2 }\int_{(r_2,R_0)} \chi(r)  \left(\int g_\lambda^2 \cos(\theta_r) \di \bvec v \right) r^2 \di r \\
    &\ge  2\pi \norm{g_\lambda}_{L^2}^{-2 }\int_{(r_2,R_0)} \left(\int  g_\lambda^2  \di \bvec v \right) r^2 \di r . 
\end{align*}

For an even function in $\theta_r$, we can integrate over $\theta_r \in (0,\pi)$, where there is a one-to-one correspondence between the variables $(r,v, w)$ and $(E,L, r)$.
We write the $\bvec v$-integral as an integral over $E$ and $L$, with Jacobian $ J(E,L, r):= [2(E-U(r))- L^2/r^2]_+^{-1/2} L/r^2 $ and apply Fubini's theorem to obtain
\begin{align*}
   \int_{r_2}^{R_0}  &\left( \int  g_\lambda^2(r, \bvec{v})  \di \bvec v \right) r^2 \di r \\
   &= 4 \pi \int_{\hat\Lambda} \tilde g_\lambda^2(E,L) \left(\int_{r_2}^{r_+(E,L)} J(E,L,r)  r^2 \di r \right) \di E\di L.
   \end{align*}
Since $\theta_r (E,L,r_2) \ge \pi/3$, 
the inner integral is comparable to the integral over all values of $r$. 
Concretely, we define $\mathsf{r_2}(E,L) \ge r_2$ such that $\theta_r(E,L, \mathsf{r_2})= \pi/3$.
For $r\le \mathsf{r_2} (E,L)$, $\cos(\theta_r)\ge \cos(\pi/3)= 1/2$. By 
using \eqref{eq:intro_frequencies}, we find
\begin{align*}
   \int_{r_2}^{r_+(E,L)} J(E,L,r)  r^2 \di r &\ge  L \int_{\mathsf{r_2}(E,L)}^{r_+(E,L)} \frac{1}{ \sqrt{2(E-U(r))- L^2/r^2}}  \di r \\
   & \ge  \frac{L T(E,L)}{6 } \\
   &= \frac{1}{3}  \int_{r_-(E,L)}^{r_+(E,L)} J(E,L,r) r^2 \di r.
\end{align*}
Hence
\begin{align*}
    \langle \gamma_\lambda, \chi \rangle 
   &\ge \norm{g_\lambda}_{L^2}^{-2 }\frac{(4\pi)^2}{6} \int_{\hat\Lambda}  \tilde g_\lambda^2(E,L) \left( \int_{r_-(E,L)}^{r_+(E,L)} J(E,L,r) r^2 \di r   \right) \di E\di L \\
   & \ge\frac{1}{6} .
\end{align*}
This allows to conclude
$$
 \langle \gamma_0, \chi \rangle =\lim_{n \to \infty } \langle \gamma_{\lambda_n}, \chi \rangle > 0. \qedhere
$$
\end{proof}

\begin{proof}[Proof of Lemma~\ref{lemma_technical}]

The set $\hat \Omega_0$  in $\R^2$ is  bounded by the lines $E= E_{0}$, $L= 0$, and a curve $L= L_{\max}(E)$, for some increasing function $L_{\max}$. For $\epsilon > 0$, $\hat \Omega_\epsilon$ stays away from the boundary $L=L_{\max}(E)$.
For each fixed pair $(E,L) \in \hat \Omega_\epsilon$, the $r$-variable takes values between the turning points $r_+(E,L)$, corresponding to $\theta_r = 0$, and $r_-(E,L)$, corresponding to $\theta_r = \pi$. 
We  define two functions $\mathsf{r}_2(E,L)> \mathsf{r}_1(E,L)$ such that $\theta_r(E,L,  \mathsf{r}_1(E,L))= \pi/2$ and $\theta_r(E,L,  \mathsf{r}_2(E,L))= \pi/3$. A sketch of this situation is given in Figure~\ref{fig:onion-orbits}.
Since $ \mathsf{r}_2$ and $\mathsf{r}_1$ are continuous functions inside $\hat{\Omega}_0$, there exists a neighborhood $V_{E,L}$ of any given $(E, L) \in \hat \Omega_0$ such that  $\max_{V_{E,L}} \mathsf{r}_1 <\min_{V_{E,L}} \mathsf{r}_2$.

This stays true on the boundary of $\hat \Omega_\epsilon$ : for pairs of the form $(E_0, L)$ with $L < L_{\max}(E_0)$, we use that the action-angle variables are well-defined for all $E<0$.
On the part of the boundary with $L= 0$, the trajectory collapses to one half of the line trajectory from $r_+$ to $0$. Still, $\lim_{L\to 0} \theta_r (E,L,r)$ is well-defined
for $r>0$ and $\mathsf{r_1}$,$\mathsf{r_2}$ have continuous extensions. 

Hence, the open sets $\{V_{E,L} | E,L \in \hat\Omega_\epsilon\}$ cover the compact set $\adh \hat\Omega_\epsilon$. After extracting a finite number of these sets, there is some $(E_*,L_*)$ such that \ref{it:divergence_point} holds on $\Lambda = \{(\bvec{x},\bvec{v}) \in \Omega_\epsilon| (E,L) \in V_{E_*,L_*}\}$.
Next, taking $r_1 = \max_{V_{E_*,L_*}} \mathsf{r}_1$, point \ref{it:suff small} is satisfied, and $r_2 = \min_{V_{E_*,L_*}} \mathsf{r}_2$ ensures point~\ref{it:bounded orbits}.
\end{proof}

\section{Calculation for polytropes}\label{sec:polytropes}

We finally prove Corollary~\ref{cor:polytropes}.
Under the hypotheses stated in the Corollary, we can bound
$$
\omega(E,L) - \omega_* \ge a(E_0-E) + b L^2 
$$
for some $a, b >0$ and all $(E,L) \in \Omega_0$. Since $\omega$ is bounded and strictly positive in $\Omega_0$, we can then estimate
$$
\rho_* (r) \le C \int_{E(x,v) < E_0} \frac{(E_0-E)^{n-1}}{(E_0-E) + c L^2} \di \bvec v := C \tilde \rho(r)
$$
for suitable constants $C$ and $c$ independent of $x$.

\begin{lemma}
For $\tilde \rho$  defined above, and for any $s \in (0,\min(n,1))$, there is a constant $C$ independent of $r \in (0,R_0)$ such that 
\begin{equation*}
     \tilde \rho(r) \le C r^{2-2s}(E_0-U(r))^{n-1/2}
     \end{equation*}
\end{lemma}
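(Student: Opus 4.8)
The plan is to carry out the velocity integration in $\tilde\rho(r)$ explicitly and reduce the estimate to a one–variable inequality. Fix $r\in(0,R_0)$ and write $\bvec v=v_r\,\hat{\bvec x}+\bvec v_\perp$ with $\bvec v_\perp\perp\bvec x$, so that $E(\bvec x,\bvec v)=U(r)+\tfrac12\bigl(v_r^2+|\bvec v_\perp|^2\bigr)$, $L(\bvec x,\bvec v)=r|\bvec v_\perp|$, and the domain $\{E<E_0\}$ is $\{v_r^2+|\bvec v_\perp|^2<\psi_0\}$ with $\psi_0=\psi_0(r):=2(E_0-U(r))>0$. With $\di\bvec v=\di v_r\,\di\bvec v_\perp$, integrating the angle in $\bvec v_\perp$ (a factor $2\pi|\bvec v_\perp|\,\di|\bvec v_\perp|$), then substituting $q=|\bvec v_\perp|^2$, rescaling $q$ by $\psi_0-v_r^2$, and finally substituting $v_r=\sqrt{\psi_0}\,\xi$, the integral factorizes as
\begin{equation*}
\tilde\rho(r)=C_n\,(E_0-U(r))^{\,n-1/2}\,\tilde G(r),\qquad
\tilde G(r):=\int_0^1\frac{\tau^{\,n-1}}{\tau+d\,(1-\tau)}\,\di\tau,\qquad d:=2c\,r^2,
\end{equation*}
where $C_n$ depends only on $n$ and is finite since $n>0$ (explicitly $C_n=2\sqrt2\,\pi\int_{-1}^1(1-\xi^2)^{n-1}\di\xi$). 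So the lemma reduces to bounding the one–variable integral $\tilde G(r)$ by a suitable power of $r$, uniformly on $(0,R_0)$.

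For that bound I would use an elementary splitting. As $r$ ranges over $(0,R_0)$ the parameter $d=2cr^2$ ranges over a bounded interval $(0,d_0]$, and $\tilde G$ is finite and continuous there, so it is enough to treat $d<\tfrac12$, the complementary range being immediate. Split $\int_0^1=\int_0^d+\int_d^1$: on $(d,1)$, $\tau+d(1-\tau)\ge\tau$ gives $\int_d^1\le\int_d^1\tau^{\,n-2}\di\tau$, which is $\le\tfrac1{n-1}$ for $n>1$, equals $\log(1/d)$ for $n=1$, and is $\le\tfrac{d^{\,n-1}}{1-n}$ for $n<1$; on $(0,d)$, $\tau+d(1-\tau)\ge d(1-\tau)\ge d/2$ gives $\int_0^d\le\tfrac2d\int_0^d\tau^{\,n-1}\di\tau=\tfrac2n\,d^{\,n-1}$. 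Hence $\tilde G(r)\le C_n(1+d^{\,n-1})$ if $n\neq1$ and $\tilde G(r)\le C_1(1+\log(1/d))$ if $n=1$, and since $s<\min(n,1)$ (so $n-s>0$ and $1-s>0$) and $d\in(0,d_0]$, each of $1$, $d^{\,n-1}$, $\log(1/d)$ is $\le C_{n,s,R_0}\,d^{\,s-1}$. Combining with the factorization gives $\tilde\rho(r)\le C\,r^{\,2s-2}(E_0-U(r))^{\,n-1/2}$; note that the power of $r$ produced here is the negative exponent $2s-2$ (consistent with $\tilde\rho$ remaining bounded, or growing, as $r\to0$), and this is exactly what is needed in the sequel, since it makes $r\,\tilde\rho(r)\le C\,r^{\,2s-1}(E_0-U(r))^{\,n-1/2}$ integrable on $(0,R_0)$.

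The main obstacle is the bookkeeping in the reduction step: one must verify that, after the two successive rescalings, the only surviving $v_r$–dependence is the prefactor $(\psi_0-v_r^2)^{\,n-1}$, so that the $v_r$–integral separates and produces $\psi_0^{\,n-1/2}$, and that all remaining $r$–dependence is collected into the single dimensionless parameter $d=2cr^2$ inside $\tilde G$. Once that identity is in hand, the one–variable estimate is routine; the only things to watch are the convergence of the auxiliary integrals $\int_{-1}^1(1-\xi^2)^{n-1}\di\xi$ and $\int_0^d\tau^{n-1}\di\tau$ (which holds precisely under $n>0$, together with $s<\min(n,1)$ for the comparison $d^{\,n-1}\lesssim d^{\,s-1}$), and the reduction to the regime $d\to0$, since for $d$ bounded away from $0$ the target bound is trivial.
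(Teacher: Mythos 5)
Your proof is correct, and it establishes exactly the bound that the paper's own argument yields, namely $\tilde\rho(r)\le C\,r^{2s-2}(E_0-U(r))^{n-1/2}$: the exponent $2-2s$ in the printed statement should carry a minus sign (a bound vanishing as $r\to0$ cannot hold, since $\tilde\rho$ stays bounded away from zero, or diverges, at the origin), and, as you note, $r^{2s-2}$ is precisely what makes $|\bvec x|^{-1}\rho_*$ integrable in Corollary~\ref{cor:polytropes}. Your route is genuinely different from the paper's. The paper works in $(E,L)$ variables, evaluates the $L$-integral in closed form (producing the $\tanh^{-1}$ term), substitutes $\alpha=2cr^2(E-U(r))/(E_0-E)$ to reach \eqref{eq:int_alpha}, and then estimates by bounding $\bigl(2cr^2/(2cr^2+\alpha)\bigr)^{n}$ by $(2cr^2/\alpha)^{\min(n,s)}$ and checking convergence of an explicit $\alpha$-integral. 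You instead split $\bvec v$ into radial and transverse parts and rescale so that the $v_r$-integral factors off as a Beta-type integral, giving the exact factorization $\tilde\rho(r)=C_n(E_0-U(r))^{n-1/2}\tilde G(2cr^2)$ with the elementary kernel $\tilde G(d)=\int_0^1\tau^{n-1}\bigl(\tau+d(1-\tau)\bigr)^{-1}\di\tau$; I verified the reduction, including your constant $2\sqrt2\,\pi\int_{-1}^1(1-\xi^2)^{n-1}\di\xi$, and the subsequent splitting at $\tau=d$, the comparison of $d^{\min(n,1)-1}$ (with a logarithm at $n=1$) to $d^{s-1}$ using only $0<s<\min(n,1)$ and $d\le 2cR_0^2$, and the trivial range $d\ge\tfrac12$ are all in order. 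What the paper's computation buys is an explicit closed-form identity displaying the precise $\alpha$-behaviour; what yours buys is elementarity (no special functions), a complete separation of the $r$-dependence into the single parameter $d=2cr^2$, and a uniform treatment of the cases $n<1$, $n=1$, $n>1$ before the final comparison.
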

This upper bound shows that $|\bvec x|^{-1} \rho_*(\bvec x)$ is integrable for all values of $n >0$ and proves Corollary~\ref{cor:polytropes}.
\begin{proof}
    As before, we change variables from $\bvec v$ to $v, w$ and express $v, w$ in terms of $E,L$.
    We obtain 
    $$
\tilde \rho(r) = \frac{4 \pi}{r^2} \int_{U(r)}^{E_0}\int_0^{L_{\max(E)}} \frac{(E_0-E)^{n-1} }{((E_0-E) + c L^2)\sqrt{2(E- U(r)) - L^2/r^2}} L\di L \di E.
    $$
Changing variables to $t = L^2/(2 r^2(E-U(r))$ and defining $\alpha(E,r) = 2 c r^2 (E-U(r))/(E_0 -E)$ gives
    $$
\tilde \rho(r) = \frac{4 \pi}{\sqrt{2}} \int_{U(r)}^{E_0}(E_0-E)^{n-2} \sqrt{E-U(r)}\int_0^{1} \frac{\di t }{(1 + \alpha(E,r) t )\sqrt{1 -t}}  \di E.
    $$
The inner integral can be computed in closed form, so we are left with
       $$
\tilde \rho(r) = 4 \pi \sqrt{2} \int_{U(r)}^{E_0} \frac{(E_0-E)^{n-2} \sqrt{E-U(r)} }{\sqrt{\alpha(E,r)(1+\alpha(E,r))}} \tanh^{-1}\left(\sqrt{\frac{\alpha(E,r)}{\alpha(E,r) + 1}} \right)\di E.
    $$

In order to estimate the behaviour in $r$ of this integral, it is convenient to take $\alpha$ as the variable. 
We use the identities
\begin{align*}
\di E &= \frac{(E_0-E)^2}{ 2 cr^2(E_0-U(r))} \di \alpha ,\\
E_0 - E &= \frac{2cr^2}{2cr^2 + \alpha} (E_0-U(r)), \quad 
E- U(r) =  \frac{\alpha}{2cr^2 + \alpha} (E_0-U(r)).
\end{align*}
To obtain
\begin{equation} \label{eq:int_alpha}
     \tilde \rho(r) = \tfrac{ 4 \pi \sqrt{2}(E_0-U(r))^{n-1/2}}{2 c r^2} 
     \int_{0}^{\infty}\left( \tfrac{2cr^2}{2cr^2 + \alpha}\right)^n\left(\tfrac{\alpha}{2cr^2 + \alpha}\right)^{1/2}
    \tfrac{ \tanh^{-1}\left(\sqrt{\tfrac{\alpha}{\alpha + 1}}\right)}{\sqrt{\alpha(1+\alpha)}} \di \alpha.
\end{equation}
For $n < 1$, we bound the first fraction by $2cr^2/\alpha$ and the second fraction by $1$, to obtain
\begin{equation*}
     \tilde \rho(r) \le C \frac{(E_0-U(r))^{n-1/2}}{(2 c r^2)^{1-n}} 
     \int_{0}^{\infty}
    \frac{ \tanh^{-1}\left(\sqrt{\tfrac{\alpha}{\alpha + 1}}\right)}{\alpha^n\sqrt{\alpha(1+\alpha)}} \di \alpha.
\end{equation*}
 The last integral is finite: for $\alpha \in (0,1)$, we use the bound $\tanh^{-1}(t) \le 2 t$ to check that the function is integrable. For $\alpha > 1$, we can use $\tanh^{-1}\left(\sqrt{\tfrac{\alpha}{\alpha + 1}}\right) \le C \ln (\alpha) $.

 Finally, if $n\ge 1$, fix any $s \in (0,1)$. Returning to \eqref{eq:int_alpha}, bound the first parenthesis in the integrand by $(2cr^2/\alpha)^{s}$, and repeat the previous arguments.
\end{proof}  

\paragraph*{Data Availability Statement} The manuscript has no associated Data.
\paragraph*{Conflicts of Interest} The authors declare to have no conflicts of interest.

\end{document}